\documentclass{article}




\usepackage[final]{neurips_2024}
\usepackage{preamble}
\usepackage{notation}
\usepackage[utf8]{inputenc} 
\usepackage[T1]{fontenc}    
\usepackage{hyperref}       
\usepackage{url}            
\usepackage{booktabs}       
\usepackage{amsfonts}       
\usepackage{nicefrac}       
\usepackage{microtype}      

\title{Unravelling in Collaborative Learning}

%

\author{%
  Aymeric Capitaine\textsuperscript{1} \\
  \And
  Etienne Boursier\textsuperscript{2} \\
  \AND
  Antoine Scheid\textsuperscript{1} \\
  \And
  Eric Moulines\textsuperscript{1} \\
  \And
  Michael I. Jordan\textsuperscript{3} \\
  \And
  El-Mahdi El-Mhamdi\textsuperscript{1} \\
  \AND
  Alain Durmus\textsuperscript{1}
}

\begin{document}

\maketitle
{\centering
\noindent\textsuperscript{1} Centre de Mathématiques Appliquées – CNRS – École polytechnique – Palaiseau, 91120, France \\
\noindent\textsuperscript{2} INRIA Saclay, Universit\'e Paris Saclay, LMO - Orsay, 91400, France\\
\noindent\textsuperscript{3} Inria, Ecole Normale Sup\'erieure, PSL Research University - Paris, 75, France \newline
\newline
}

\everypar{\looseness=-1}
\begin{abstract} Collaborative learning offers a promising avenue for leveraging decentralized data. However, collaboration in groups of strategic learners is not a given. In this work, we consider strategic agents who wish to train a model together but have sampling distributions of different quality. The collaboration is organized by a benevolent aggregator who gathers samples so as to maximize total welfare, but is unaware of data quality. This setting allows us to shed light on the deleterious effect of \textit{adverse selection} in collaborative learning. More precisely, we demonstrate that when data quality indices are private, the coalition may undergo a phenomenon known as \emph{unravelling}, wherein it shrinks up to the point that it becomes empty or solely comprised of the worst agent. We show how this issue can be addressed without making use of external transfers, by proposing a novel method inspired by probabilistic verification. This approach makes the grand coalition a Nash equilibrium with high probability  despite information asymmetry, thereby breaking unravelling.
\end{abstract}
\section{Introduction}
Collaborative learning is a framework in which multiple agents share their data and computational resources to address a common learning task \citep{blum-etal,kairouz2021advances}. A significant challenge arises when the quality of these distributions is unknown centrally and agents are strategic. Indeed, participants may be tempted to withhold or misrepresent the quality of their data to gain a competitive advantage. These  strategic behaviors and their consequences have been studied extensively in the literature on information economics \citep{mc95, lm01}. In particular, information asymmetry is known to result in \textit{adverse selection}, whereby low-quality goods end up dominating the market. In the current paper we study collaborative learning from the perspective of information economics.

\looseness=-1
A vivid illustration of adverse selection is found in Akerlof's seminal work on the market for lemons \citep[second-hand cars of low quality,][]{a70}. Because buyers cannot properly assess the quality of cars on the second-hand market, their inclination to pay decreases. As a consequence, sellers with high-quality cars withdraw from the market, since the proposed price falls below their reservation price. This in turn lowers the buyers' expectation regarding the average quality of cars on the market, so their willingness to pay decreases even more, which \textit{de facto} crowds out additional cars. The market may therefore enter a death spiral, up to the point where only low-quality cars are exchanged in any competitive Nash equilibrium. This phenomenon is known as \textit{unravelling}. The insurance market serves as another poignant example of this effect \citep{rs76,ef11,h13}. In insurance, information asymmetry arises due to the fact that insurees possess private knowledge about their individual risk profiles, which insurers lack. Individuals with higher risk are more inclined to purchase policy, while those with lower risk opt out. Consequently, insurers are left with a pool of policyholders skewed towards higher risk, leading to increased premiums to cover potential losses. This, in turn, prompts low-risk individuals to exit the market, exacerbating adverse selection further---a cycle reminiscent of the unravelling described by Akerlof.

We study the problem of whether collaborative learning could also fall victim to unravelling. We consider strategic agents who have access to sampling distributions of varying quality and wish to jointly train a  model.  They delegate the training process to a central authority who collects samples so as to maximize total welfare. We ask whether adverse selection can arise when data quality is private information. In particular, the presence of a low-quality data owner may harm the model, prompting high-quality owner to leave the collaboration and train a model entirely on their own. Their departure would decrease the average data quality even more, and create a vicious circle. In the worst case, the coalition of learners would reduce to the lowest data quality owner alone. This question is of prime importance from a practical point of view, because unravelling could jeopardize the long-run stability of collaborative models deployed at large scale. 

Our contribution is threefold:
\begin{enumerate}[leftmargin=1cm]
    \item We provide a rigorous framework for analyzing  collaborative learning with strategic agents having data distributions of varying quality. On the one hand, we leverage tools from domain adaptation to capture a notion of data quality formally. On the other hand, we model collaboration as a principal-agent problem, where the principal is an aggregator in charge of collecting samples so as to maximize social welfare. This setup allows us to derive the benchmark welfare-maximizing collaboration scheme when data quality is public information.
    \item We show that when data quality is  private, a naive aggregation strategy which consists in asking agents to declare their quality type and applying the optimal scheme results in a complete unravelling. More precisely, the set of agents willing to collaborate is either empty or made of the lowest-quality data owner alone at any pure Nash equilibrium. 
    
    \item \looseness=-1 We present solutions to unravelling. When transfers are allowed, the VCG mechanism suffices to re-establish optimality. When transfers are not possible, we leverage probabilistic verification techniques  
    to design a mechanism which breaks unravelling. More precisely, we ensure that the optimal, grand coalition ranks with high probability among the Nash equilibria of the game induced by our mechanism. We demonstrate how to implement our mechanism practically in the setting of classification. 
\end{enumerate}  

\paragraph{Related work.}\looseness=-1 The issue of information asymmetry in machine learning has been an area of recent activity. Several learning settings have been considered, including bandits \citep{wl24}, linear regression \citep{kk21,kk212}, classification \citep{bhp21} and empirical risk minimization \citep{dk23,lst23} in a federated context \citep{xkn22}. 

Most of these studies focus on the sub-problem of \textit{moral hazard}, where agents take actions that are unobserved by others. This situation usually results in under-provision of effort and inefficiency at the collective scale \citep{lm01}. This issue appears naturally in federated learning, because model updates are performed locally. For instance,  \citet{kwj22} show that heterogeneity in sampling costs results in total free-riding without a proper incentive scheme. \citet{hk22} show that under-provision of data points in federated learning arises from privacy concerns. \citet{yx23} consider a federated classification game where agents can reduce the noise in their data distribution but incur a costly effort to do so. In the same vein, \citet{hkm23} study the case of agents who are interested in different models, and may skew their sampling measure accordingly. \citet{saig2023delegated} and \citet{abjh23} study hidden actions when a principal delegates a predictive task to another agent, and show that thresholds or linear contracts are able to approximate the optimal contracts. 

\textit{Adverse selection} is another type of information asymmetry, where preferences of agents are unobserved rather than actions. This issue also naturally arises in collaborative learning, because the data distributions from which agents sample or about which they care may not be public. While data heterogeneity is a widely explored topic in federated learning (see for instance \citealt{gao2022survey} for a general survey, and \citealt{fu2023client} for the specific problem of \textit{client selection}), the strategic aspect has been rarely considered. Most studies doing so focus on hidden sampling costs (see, e.g., \citealt{kwj22}, or \citealt{wl24} in a bandit context), but few address the fundamental problem of distribution shift. \citet{abjh23} mentions the issue of adverse selection when a principal delegates a predictive task to an agent, and provide qualitative insights about the optimal contract. However, they consider a single agent and leave aside the question of participation. Finally, \citet{w24} and \citet{toik23} study the question of the stability of collaborative learning between competing agents. However, their analysis relies on ad-hoc market structures rather than information asymmetry per se. As such, our work is the first to demonstrate the effect of imperfect information on the sustainability of collaborative learning. 

\paragraph{Organization.} \Cref{section:model} presents our model and assumptions. \secref{section:full_information} studies a full information benchmark, which allows us to derive the welfare-maximizing contribution scheme. In \secref{section:hidden_information}, we turn to the more realistic case where data quality is private information. We first show that in this case, a naive aggregation method leads to total unravelling. Second, we introduce a mechanism which breaks unravelling by  inducing a game where the grand coalition is a pure strategy Nash equilibrium with high probability.

\section{Model}\label{section:model}

\paragraph{Statistical framework.}Let $(\msx,\mcx)$ and $(\msy,\mcy)$ be two measurable spaces and denote by $\mcp$ a family of  probability measures on $(\msx\times \msy,\mcx\otimes \mcy)$. 
We consider $\agents=\iint{1}{J}$ agents who aim to perform a prediction task associated with a hypothesis class $\hypoclass \subset  \{ g : \msx \to \msy\}$, a loss function $\ell : \msy \times \msy$ and a probability measure $P_0 \in\mcp$. Each agent seeks to minimize $g\in\hypoclass \mapsto \risk{g}{P_0}$ where for any probability distribution $P \in \mcp$, $\risk{g}{P}= \int  \ell(g(x), y) \rmd P(x,y)$ is the risk associated to $g \in\hypoclass$ with respect to $P$.

We leverage tools and results from statistical learning theory. Denote by $\{(X_i,Y_i)\}_{i \geq 1}$ the canonical process on $\msx \times \msy$ and 
denote by $\P_P$ and $\PE_P$ the canonical probability and expectation under which $\{(X_i,Y_i)\}_{i \geq 1}$ are \iid~random variables with distribution $P \in \mcp$.
With this notation, we can introduce our assumptions on $\hypoclass$ and~$\cP$.

\begin{assumption}\label{assumption_upper_bound_risk}
    For $\delta\in(0,1)$, there exist $\alpha_{\delta} >0$, $\beta>0$ and $\gamma>0$ such that for any distribution $P\in\cP$,  hypothesis $g\in\cG$ and $n\geq 1$
    $$
   \PPArg{\abs{\risk{g}{P}-\frac{1}{n}\sum_{i=1}^{n}\ell(g(X_i), Y_i)}\leq \frac{\alpha_{\delta}}{(1+n)^\gamma} + \beta}{P} \geq 1 - \delta \eqsp.
    $$
  \end{assumption}
  
This assumption covers a wide range of situations. For instance in the classification case where $\msy=\defEns{-1,1}$ and $\ell: (y,\tilde{y})\mapsto \2{y\tilde{y}\leq 0}$, $\alpha_{\delta}=\sqrt{\ln(1/\delta)/2}$, $\beta=2\textsc{Rad}(\hypoclass)$ and $\gamma=1/2$ where $\textsc{Rad}(\hypoclass)$ is the   empirical Rademacher complexity of $\hypoclass$ \citep{b04}. Similarly, the Bayesian PAC approach in the linear regression context with bounded loss leads to $\alpha_{\delta}=\KL(\rho \|\pi)+\ln(1/\delta),\,\beta=\norm{\ell}^2 _{\infty} / 8$ and $\gamma=1$ where $\rho$ and $\pi$ are any distributions on  $\hypoclass$ \citep[][Corollary 4]{sfp19}. 

For ease of notation, we let $\risk{g}{j}$ serve as a shorthand for $\risk{g}{P_j}$.
It is moreover assumed that agent $j\in\agents$ cannot directly sample from $\Pzero$, but has instead access to a distribution $P_j\in\cP$ which deviates from $\Pzero$ according to the $\hypoclass$-divergence:
\begin{assumption}
    \label{definition_type}
For any $ j \in [J]$, $P_j \in \cP$ has finite  $\hypoclass$-divergence:  $ \type_j = \sup_{g\in\hypoclass} \absLigne{\risk{g}{j} - \risk{g}{0}} < \plusinfty$ \eqsp.
\end{assumption} 
\looseness=-1 Intuitively, for any $j\in\agents,\,\type_j\geq 0$ models the bias incurred by having access to samples from $P_j$ instead of the target distribution $P_0$. More precisely, the risk excess associated with empirical risk minimization (ERM) based on samples from $P_j$ is in the worst case at least $\type_j$.
A poor sampling distribution $P_j$---which corresponds to a high $\type_j$ in the previous expression---might be the consequence of low-quality sensors or degraded experimental conditions resulting in noisier data points. 

The class of discrepancies appearing in  \Cref{definition_type} has been considered in the domain adaptation literature~\citep[see, e.g.,][]{bsb10,kb04,k19}. It provides a natural framework to analyze the behavior of models trained on diverse distributions, and is practically appealing since it can be easily estimated in the context of classification \citep{bsb10}.

\looseness=-1 We make the  following assumptions on the  quality indexes $(\type_1,\ldots,\type_J)$, hereafter referred to as \textit{types}.%
\begin{assumption}
    \label{assumption_types_sorted}There exists $(\typemin, \typemax)\in\R_{+}^2$ such that $\typemin\leq \type_1 < \type_2 < \ldots < \type_J \leq \typemax$.
\end{assumption}
\Cref{assumption_types_sorted} The ordering assumption is just for ease of exposition but is neither used by the aggregator nor the agents. Our condition that types are strictly  different is for convenience in simplifying the proofs.
\paragraph{Collaborative learning framework.}We further suppose that agent $j\in\agents$ can fit a model $g\in\hypoclass$ based on \iid~samples $\defEnsLigne{(X^j _1,Y^j _1),\ldots,(X^j _{n_{j}},Y^j _{n_{j}})}$
  of size $n_{j} \geq 0$ from $P_j \in\cP$ in one of two ways: they can compute on their own, or collaborate. This is captured by the two following options:
\begin{enumerate}[wide, labelwidth=!, labelindent=0pt, label= \textbf{Option} \arabic*)]
\item \label{item:option1}  Agent $j$ performs ERM on their own samples:

\begin{equation}
    \label{definition_outside_model}
 \txts   \agentmodel{j} = \argmin_{g\in\hypoclass}\emprisk{g}{j}\eqsp , \qquad  \emprisk{g}{j}=n^{-1}_j \sum_{i=1}^{n_{j}}\ell(g(X^j _i),Y^j _i) \eqsp.
  \end{equation}
  
  This non-collaborative procedure is referred to as the \emph{outside option}.
\item \label{item:option2} Agent $j$  can take part in a \textit{coalition} orchestrated by a central data aggregator, encoded as $\collab=(B_1,\ldots, B_J)\in\defEns{0,1}^J$, where $B_j=1$ means that agent $j$ is member of the coalition. We also write $\cB=\defEns{j\in\agents:\:B_j=1}$. In exchange for their samples, agent $j$ gains access to the collaborative model trained over the concatenation of samples:

\begin{align}
    \collabmodel&=\argmin_{g\in\hypoclass}\emprisk{g}{\collab}\label{definition_collab_model}\eqsp, \\
    \text{ where }\quad \emprisk{g}{\collab}&= N^{-1}\sum_{j\in\agents}B_j \sum_{i=1}^{n_{j}}\ell(g(X^{j}_i), Y^{j}_i)\quad\text{ and } \quad N = \sum_{j\in\agents}B_jn_{j} \notag\eqsp.
\end{align}

\end{enumerate}
\paragraph{Agent utilities.}
We assume that agents incur a unitary cost for sampling from their distribution and dislike statistical risk. Therefore, a baseline model for measuring the preferences associated with a model $g \in \mcg$ and a number of samples $n$ is based on a linear map:  $(g,n) \mapsto -a\risk{g}{0}-cn$ for~$a,c>0$. In practice, however, $\risk{g}{0}$ is typically unknown so agents instead can use a PAC bound of the form $\P(\risk{\hat{g}_n}{0}\leq \Rstar + \eps)\geq 1 -\delta$ to a assess a model $\hat{g}_n\in\hypoclass$ trained over their samples,
where $\eps>0$, $\delta\in(0,1)$ and $\Rstar=\inf_{g\in\hypoclass}\risk{g}{0}$. Our next result shows that  \cref{assumption_upper_bound_risk} and \cref{definition_type} allow each agent to pin down such an $\eps > 0$, under either \ref{item:option1} or \ref{item:option2}.
Assuming \Cref{assumption_upper_bound_risk}, 
 define the function $\riskexcess$ for any $(\theta,n) \in \types\times\R_{+}$ by%
\begin{equation}
  \label{eq:def_vareps_delta}
\riskexcess(\type,n) =  2\parentheseDeux{\alpha_{\delta}(1+n)^{-\gamma} + \beta + \type} \eqsp.
\end{equation}%
\begin{restatable}{lemma}{excessrisk}\label{lemma:excessrisk}
  Assume \Cref{assumption_upper_bound_risk} and \Cref{definition_type}.
    \begin{enumerate}[wide, labelwidth=!, labelindent=0pt,label=(\roman*)]
    \item 
  Any agent $j\in\agents$ picking the outside \ref{item:option1} obtains a model $\agentmodel{j}\in\hypoclass$ achieving
    $$
    \risk{\agentmodel{j}}{0} \leq \Rstar + \riskexcess(\type_j ,n_{j})\quad\text{with probability }1-\delta\eqsp.
    $$
    \item   Any coalition $\collab\in\defEns{0,1}^J$ drawing $\mathbf{n} = (n_1,\ldots,n_{J}) \in \R_{+}^J$, samples obtains a model $\collabmodel\in\hypoclass$ achieving
    $$
    \risk{\collabmodel}{0} \leq \Rstar + \riskexcess(\vartheta, N)\quad\text{with probability }1-\delta\eqsp,
    $$
    where $N$ is the total of samples and $\vartheta$ is the weighted average type within the coalition:
    \begin{equation}
      \label{eq:def_N_theta}
      N(B,\nbf)=\sum_{j\in\agents}B_jn_{j} \eqsp, \qquad \avgtype = N^{-1}\sum_{j \in \agents} B_jn_{j} \type_j \eqsp.
    \end{equation}
  \end{enumerate}
\end{restatable}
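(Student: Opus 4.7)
Both parts are instances of the standard ERM excess risk decomposition, extended with a domain-discrepancy step that leverages \Cref{definition_type}. I would read \Cref{assumption_upper_bound_risk} as giving uniform convergence over $g\in\hypoclass$ (the classification instantiation with $\beta=2\textsc{Rad}(\hypoclass)$ confirms this is the intended reading), so that a single ``good event'' of probability $1-\delta$ simultaneously controls the generalization gap at the ERM output and at a fixed oracle hypothesis.

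\textbf{Part (i).} Fix $g^{\star}\in\argmin_{g\in\hypoclass}\risk{g}{0}$ (or an $\eps$-minimizer, letting $\eps\to0$ at the end). I would write the telescoping decomposition
\begin{align*}
\risk{\agentmodel{j}}{0}-\Rstar
&= \bigl[\risk{\agentmodel{j}}{0}-\risk{\agentmodel{j}}{j}\bigr]
+\bigl[\risk{\agentmodel{j}}{j}-\emprisk{\agentmodel{j}}{j}\bigr] \\
&\quad+\bigl[\emprisk{\agentmodel{j}}{j}-\emprisk{g^{\star}}{j}\bigr]
+\bigl[\emprisk{g^{\star}}{j}-\risk{g^{\star}}{j}\bigr]
+\bigl[\risk{g^{\star}}{j}-\risk{g^{\star}}{0}\bigr].
\end{align*}
The first and last terms are each $\leq \type_j$ by the $\hypoclass$-divergence definition; the third is $\leq 0$ by the ERM optimality of $\agentmodel{j}$; the second and fourth are each $\leq \alpha_\delta(1+n_j)^{-\gamma}+\beta$ on the uniform concentration event of \Cref{assumption_upper_bound_risk} applied to $P_j$ with $n_j$ samples. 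Summing yields $2[\alpha_\delta(1+n_j)^{-\gamma}+\beta+\type_j]=\riskexcess(\type_j,n_j)$.

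\textbf{Part (ii).} I would mimic the same five-term decomposition, with $\collabmodel$ in place of $\agentmodel{j}$, but routed through the weighted mixture law $P_{\mathrm{avg}}=\sum_{j\in\cB}(B_j n_j/N)\,P_j$. Two observations make this work: by linearity, $\risk{g}{P_{\mathrm{avg}}}=N^{-1}\sum_j B_j n_j \risk{g}{j}$ and $\PE[\emprisk{g}{\collab}]=\risk{g}{P_{\mathrm{avg}}}$, so the empirical risk in \eqref{definition_collab_model} is the natural estimator for $\risk{\cdot}{P_{\mathrm{avg}}}$. The discrepancy gap $|\risk{g}{0}-\risk{g}{P_{\mathrm{avg}}}|$ is bounded by $\avgtype$ by taking the weighted convex combination of the $\hypoclass$-divergence inequalities. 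The ERM gap vanishes by definition of $\collabmodel$. The two generalization gaps, at $\collabmodel$ and at $g^{\star}$, are each $\leq \alpha_\delta(1+N)^{-\gamma}+\beta$. Summing gives $\riskexcess(\avgtype,N)$.

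\textbf{Main obstacle.} The delicate point is the generalization gap in part (ii): \Cref{assumption_upper_bound_risk} is phrased for $N$ samples drawn i.i.d.\ from a single $P\in\mcp$, whereas the pooled samples are independent but only identically distributed within each stratum $j\in\cB$. The resolution is that the underlying concentration tools behind \Cref{assumption_upper_bound_risk} (Hoeffding for bounded losses, Rademacher complexities, or PAC-Bayes with bounded loss) all extend verbatim to independent non-identically distributed summands, with the same $N^{-\gamma}$ rate, because the variance/complexity bounds only use independence and a uniform bound on the per-sample loss. I would therefore either invoke \Cref{assumption_upper_bound_risk} applied at the mixture level (if one assumes $\mcp$ closed under mixtures) or, more robustly, state the argument directly from the tail inequality that underlies the assumption. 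Everything else is routine bookkeeping.
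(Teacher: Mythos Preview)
Your proposal is correct and follows essentially the same route as the paper: the same five-term telescoping decomposition (domain shift $\to$ concentration $\to$ ERM optimality $\to$ concentration $\to$ domain shift), with part (ii) routed through the weighted mixture risk $\cR_\collab(g)=\sum_j \lambda_j \risk{g}{j}$ exactly as the paper does. The only noteworthy difference is that you explicitly flag the stratified-sampling subtlety in part (ii), whereas the paper simply invokes \Cref{assumption_upper_bound_risk} on the pooled sample without comment; your observation that the underlying concentration tools extend to independent non-identical summands is the honest justification the paper leaves implicit.
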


When the context is clear, we write $\riskexcess(\collab, \bfn)=\riskexcess(\vartheta(\collab, \bfn), N(\collab, \bfn))$ to lighten notation. Based on \Cref{lemma:excessrisk}, we define the utility of agent $j\in\agents$ as
\begin{equation}
    \label{definition_utility}
    \utility_{j}: (\collab, \bfn)\mapsto -a\parentheseDeux{\,\Rstar +(1-B_j)\riskexcess(\type_j,n_{j}) + B_j \riskexcess(\collab,\bfn)\,}-cn_{j}\eqsp.
  \end{equation}

Note that for any $j\in \agents$, we take $n_{j}\geq 0$ to be a real number for ease of presentation.  In \eqref{definition_utility}, $a/c >0$ captures the extent to which individuals are willing to trade off model quality against sampling cost.
  
  For any $j\in\agents$ and $\collab_{-j}=(B_1,\ldots,B_{j-1},B_{j+1},\ldots,B_J)\in\defEns{0,1}^{J-1}$, we denote by a slight abuse of notation $(B, \collab_{-j})=(B_1,\ldots,B_{j-1},B,B_{j+1},\ldots,B_J)$ for any $B\in\defEns{0,1}$. Similarly for $\bfn_{-j}=(n_1,\ldots, n_{j-1},n_{j+1},\ldots,n_{j})\in\R_{+}^{J-1}$, we write $(n,\bfn_{-j})=(n_1,\ldots,n_{j-1},n,n_{j+1},\ldots,n_{j})$ for any $n\geq 0$.
  We can then characterize the optimal behavior of any agent picking the outside option as follows.

\begin{restatable}{proposition}{outsideoption}
  \label{prop:outsideoption}
  Assume \Cref{assumption_upper_bound_risk} and  \Cref{definition_type}.
  For any $j\in\agents$, $\collab_{-j}\in\defEns{0,1}^{J-1}$ and $\bfn_{-j}\in\R_{+}^{J-1}$, the optimal number of samples to draw under \ref{item:option1} is $\argmax_{n\ge 0}\utility_{j} ((0,\collab_{-j}),(n,\bfn_{-j})\,;\,\type_j)= \noutside$ where
  \begin{equation}
    \label{eq:def_noutside}
    \noutside = (2ac^{-1}\gamma\alpha_{\delta})^{1/(\gamma+1)} -1 \eqsp.
  \end{equation}
\end{restatable}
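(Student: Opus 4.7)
The plan is to substitute directly into the utility formula, reduce to an unconstrained scalar optimization, and invoke concavity to justify the first-order condition.

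First I would specialize \eqref{definition_utility} to the case $B_j=0$ to obtain
\[
\utility_j\bigl((0,\collab_{-j}),(n,\bfn_{-j})\bigr) = -a\Rstar - 2a\bigl[\alpha_\delta (1+n)^{-\gamma} + \beta + \type_j\bigr] - cn,
\]
observing that the dependence on $\collab_{-j}$ and $\bfn_{-j}$ vanishes entirely (an agent picking \ref{item:option1} is isolated from coalition choices). So the optimization collapses to maximizing a one-variable function $\varphi(n) := -2a\alpha_\delta(1+n)^{-\gamma} - cn$ over $n\geq 0$, with the remaining terms being constants.

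Next I would compute $\varphi'(n) = 2a\alpha_\delta\gamma(1+n)^{-\gamma-1} - c$ and $\varphi''(n) = -2a\alpha_\delta\gamma(\gamma+1)(1+n)^{-\gamma-2}<0$, so that $\varphi$ is strictly concave on $\R_+$ (using $a,\alpha_\delta,\gamma>0$). Setting $\varphi'(n)=0$ and solving gives $(1+n)^{\gamma+1} = 2a\gamma\alpha_\delta/c$, yielding the unique interior critical point $\noutside = (2ac^{-1}\gamma\alpha_\delta)^{1/(\gamma+1)}-1$ from \eqref{eq:def_noutside}, which by concavity is the global maximizer.

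The one genuine subtlety is the boundary: the formula as stated implicitly assumes the interior optimum is nonnegative, i.e., $2ac^{-1}\gamma\alpha_\delta\geq 1$; otherwise $\varphi$ is strictly decreasing on $\R_+$ and the maximizer is $n=0$. I would either (a) flag this as a standing regime assumption — consistent with the paper's framing that agents find it worthwhile to sample at all — or (b) restate the optimum as $\max\{0,\noutside\}$. No step is genuinely difficult; the main obstacle is mostly bookkeeping (keeping the sign of $\gamma$ derivatives straight) plus the boundary caveat just mentioned.
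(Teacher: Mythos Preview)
Your proposal is correct and follows essentially the same route as the paper: specialize the utility at $B_j=0$, observe strict concavity (the paper works with strict convexity of $-u_j$), and solve the first-order condition. Your boundary caveat is exactly the standing assumption $2a/c>(\gamma\alpha_\delta)^{-1}$ that the paper imposes immediately after the proposition, so option~(a) is the right reading.
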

In what follows, we assume $2a/c > (\gamma\alpha_{\delta})^{-1}$ to exclude the pathological case where no agent is willing to sample data points. From now on, we denote by
\begin{equation}
  \label{eq:def_outside_j}
  \outside _j = \utility_{j} ((0,\collab_{-j}),(\noutside,\bfn_{-j})) \eqsp
\end{equation}
the best achievable utility under the outside option. Note that  $\noutside$ does not depend on $\type_j\in\types$ (but $\outside_j$ does) so all agents outside of the coalition draw a same number of data points $\noutside >0$. This result, which may be surprising at first glance, comes from the fact that all agents have the same accuracy-to-sampling-cost ratio $a/c$ in their utility.
\paragraph{Aggregator.}We finally assume that the aggregator acts benevolently to set up a Pareto-optimal collaboration, by maximizing the total welfare under individual rationality. In other words, they solve:
\begin{align}
\label{master_problem_in_text} 
    &\text{maximize}\quad W:\:(\collab,\bfn)\in\defEns{0,1}^J \times \R_{+}^J \mapsto \sum_{j\in\agents}\utility_{j} (\collab, \bfn)\\
    &\text{subject to}\quad
        \min_{j\in\agents}\utility_{j} ( \collab, \bfn)- \outside_j \geq 0\notag \eqsp.
\end{align}
In the Social Choice literature, $W$ is referred to as the utilitarian social  welfare function.
The participation constraint ensures that no agent within the coalition finds it beneficial to switch to their outside option. Note that $\bfn\in\R_{+}^J$ is required to have non-negative entries, which prevents the aggregator from giving away data points to agents. 


%
\section{Full-Information Benchmark: First-Best Collaboration}
\label{section:full_information}
In this section, we assume that the profile of types $(\theta_1,\ldots,\theta_J)\in\types^J$ is public, and study how the aggregator can implement an optimal collaboration among agents under this most-favorable scenario. 

\paragraph{Exact solution.}We are looking for a solution to the aggregator's problem \eqref{master_problem_in_text}. For any $\collab\in\{0,1\}^J$ and $\bfn \in \rset_+^J$, denote by%
\begin{align}
 \maxcontrib_j(\collab,\bfn) &=\max\defEnsLigne{n\geq0:\,u_j \parenthese{(1,\collab_{-j}), (n,\bfn_{-j})\sep\type_j} \geq \outside _j} \notag\\
  &=\noutside - (a/c)[\eps(\collab, \bfn)-\eps(\type_j, \noutside)]\label{eq:def_n_contrib_j}\eqsp,
\end{align}%
the maximum number of samples that agent $j$ can be asked to provide within the coalition under its participation constraint, where $\noutside$ is defined as in \eqref{eq:def_noutside}, and $\outside_j$ as in \eqref{eq:def_outside_j}. With this notation, problem \eqref{master_problem_in_text} rewrites
\begin{equation}
    \text{maximize}\quad W(\collab, \bfn)\quad\text{subject to}\quad\min_{j\in\cB}\maxcontrib_j (\collab, \bfn)- n_j \geq 0\label{master_problem_in_text2}\eqsp.
\end{equation}
\begin{restatable}{theorem}{optimalfullinfo}\label{prop:optimalfullinfo}
    Assume \Cref{assumption_upper_bound_risk}, \Cref{definition_type}, \Cref{assumption_types_sorted}. 
    Problem \eqref{master_problem_in_text} admits a unique solution  $(\Bopt, \bfnopt(\bfth))\in\defEns{0,1}^J \times \R_{+}^J$. Moreover,
    \begin{enumerate}[wide, label=(\roman*)]
        \item $\Bopt=\bOne=(1,\ldots,1)$,
        \item Denoting $\bfnopt(\bfth) = (\nopt_1(\bfth),\ldots,\nopt_J(\bfth))$, there exists $L^\opt\in\agents$ such that for any $j\in\agents$,
        \begin{equation}
\label{eq:optimalcontributionscheme}
\nopt_j (\bfth)\begin{cases}
    =\maxcontrib_j (\bOne,\bfnopt(\bfth)) &\text{if }j<L^\opt,\\
    \in [\,0\,,\,\maxcontrib_{j}(\bOne,\bfnopt(\bfth))\,]&\text{if }j=L^\opt,\\
    =0&\text{otherwise}.
\end{cases}
        \end{equation}
    \end{enumerate}
  \end{restatable}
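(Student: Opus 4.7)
The plan is to split the proof into two parts: (i) showing $\Bopt = \bOne$, and (ii) deriving the threshold structure of $\bfnopt$ under $\collab = \bOne$.

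\textbf{Part (i): $\Bopt=\bOne$.} The first move is to rewrite the welfare in ``surplus form.'' For any $j \notin \cB$ the contribution $-aR^* - a\riskexcess(\type_j,n_j) - cn_j$ to $W$ is maximised at $n_j = \noutside$ by the same computation underlying \Cref{prop:outsideoption}, so I may restrict WLOG to $n_j = \noutside$ for outside agents, obtaining
\[
W(\collab, \bfn) \;=\; \sum_{j \in \agents} \outside_j \;+\; c \sum_{j \in \cB} \bigl[\maxcontrib_j(\collab,\bfn) - n_j\bigr].
\]
Welfare thus decomposes into the all-outside baseline plus a non-negative coalition surplus. Equations \eqref{eq:def_n_contrib_j} and \eqref{eq:def_vareps_delta} show that $\maxcontrib_j$ is strictly increasing in $\type_j$ at fixed $(\collab,\bfn)$, so once $1 \in \cB$ the participation bound $\maxcontrib_1 \geq n_1 \geq 0$ propagates to $\maxcontrib_j \geq 0$ for every $j$. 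Adjoining any missing $k \notin \cB$ with $n_k' = 0$ then leaves $N$ and $\avgtype$ (hence $\riskexcess$ and each $\maxcontrib_j$) unchanged, preserves feasibility, and weakly increases the surplus by $c\sum_{k\notin\cB}\maxcontrib_k \geq 0$. It remains to show $1 \in \cB^\opt$, which I would establish by contradiction via a \emph{swap-and-perturb} step: if $m := \min \cB^\opt \geq 2$, redistribute a small amount $t \in (0, n^\star_m]$ of $m$'s budget to agent $1$ by setting $n_1' = t$ and $n'_m = n^\star_m - t$. This preserves $N$, strictly decreases $\avgtype$ (since $\type_1 < \type_m$), and therefore strictly lowers $\riskexcess$ and enlarges every $\maxcontrib_j$. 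The main technical hurdle is checking agent $1$'s participation $t \leq \maxcontrib_1'$; it reduces to a linear inequality in $t$ that is solvable for sufficiently small $t > 0$ precisely because $\avgtype^\star > \type_1$. The resulting strict welfare increase contradicts optimality, yielding $1 \in \cB^\opt$ and, combined with the propagation argument, $\Bopt = \bOne$.

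\textbf{Part (ii): structure of $\bfnopt$.} Under $\collab = \bOne$, $W(\bOne,\bfn) = -aJR^* - aJ\riskexcess(\bOne,\bfn) - cN$, and a direct calculation gives
\[
\partial_{n_j} W(\bOne, \bfn) \;=\; 2aJ\gamma\alpha_\delta(1+N)^{-\gamma-1} - c \;-\; \frac{2aJ}{N}\,(\type_j - \avgtype),
\]
whose $j$-dependence is confined to the affine, strictly decreasing term $-(2aJ/N)(\type_j - \avgtype)$. I would then combine this monotonicity in $\type_j$ with the KKT conditions for $0 \leq n_j \leq \maxcontrib_j(\bOne,\bfn)$: indices $j$ at which $\partial_{n_j} W > 0$ must bind the participation ceiling ($n_j = \maxcontrib_j$), indices at which $\partial_{n_j} W < 0$ must bind the floor ($n_j = 0$), and at most one index (namely $L^\opt$) can satisfy $\partial_{n_j} W = 0$. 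This produces exactly the shape in \eqref{eq:optimalcontributionscheme}, and the strict inequalities in \Cref{assumption_types_sorted} rule out ties, delivering uniqueness of both $L^\opt$ and $\bfnopt(\bfth)$.
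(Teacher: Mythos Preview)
Your surplus decomposition in Part~(i) is a genuinely cleaner route than the paper's two-case contradiction. The paper distinguishes whether the excluded agent's type lies above or below $\avgtype$ and treats each case with a separate construction; your identity $W(\collab,\bfn)=\sum_j\outside_j+c\sum_{j\in\cB}[\maxcontrib_j-n_j]$ makes the improvement from enlarging $\cB$ transparent. Two small repairs: first, your swap takes $t\in(0,n_m^\star]$ with $m=\min\cB^\opt$, but nothing forces $n_m^\star>0$; you should instead transfer from any $j\in\cB^\opt$ with $n_j^\star>0$ (one exists since $N^\star>0$ at the optimum). Second, for uniqueness you need the adjoin step to be \emph{strict}: once $1\in\cB^\opt$ gives $\maxcontrib_1\ge 0$, strict ordering of types yields $\maxcontrib_k>\maxcontrib_1\ge 0$ for $k\ge 2$, so the surplus gain $c\,\maxcontrib_k$ is strictly positive.

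Part~(ii) has a real gap. The constraints $n_j\le\maxcontrib_j(\bOne,\bfn)$ are \emph{not} box constraints: $\maxcontrib_j$ depends on all of $\bfn$ through $\eps(\bOne,\bfn)$, so the implication ``$\partial_{n_j}W>0\Rightarrow n_j=\maxcontrib_j$'' is not valid KKT reasoning---raising $n_j$ may tighten some other agent's binding constraint $n_k=\maxcontrib_k$. The paper sidesteps this by reparametrising as $(\blamb,N)$ and writing the full Lagrangian in those coordinates, where the coupling is explicit (\Cref{lemma_charact_lambda}). Your approach can be salvaged directly in $\bfn$-coordinates: the full stationarity condition reads
\[
\rho_k-\mu_k \;=\; \partial_{n_k}W \;-\; \tfrac{a}{c}\Bigl(\textstyle\sum_j\rho_j\Bigr)\,\partial_{n_k}\eps,
\]
and both $\partial_{n_k}W$ and $-\partial_{n_k}\eps$ are affine, strictly decreasing in $\type_k$ with the same sign on the $\type_k$-coefficient. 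Hence $\rho_k-\mu_k$ is strictly decreasing in $k$, and complementary slackness then gives the threshold exactly as you claim. The monotonicity insight is right; what is missing is that the relevant monotone quantity is $\rho_k-\mu_k$, not $\partial_{n_k}W$ itself, and the extra Lagrange-multiplier term must be carried through.
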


The couple $(\Bopt, \bfnopt)$ is referred to as the \textit{optimal contribution scheme}. Although implicit, the condition \eqref{eq:optimalcontributionscheme} provides insights about the optimal scheme. The aggregator makes everyone enter the coalition, but only asks the $L^\mathrm{opt}>0$ first-best agents to contribute. This allows to obtain the best possible collaborative model while sparing any sampling cost to other agents. Moreover, the number of required samples $\nopt_j (\bfth)$ slightly differs from $\noutside$ according to the relative performance of the collaborative model with respect to agent $j$'s one: if the agent gets a better accuracy by collaborating, the aggregator can ask them for more data; if on the other hand the agent gets a worse model by collaborating (i.e., they are a contributor with very high quality data), the aggregator can only ask less data because of the participation constraint.

\paragraph{Relaxed solution.} Working with the optimal scheme $(\Bopt, \bfnopt)$ is difficult because $\maxcontrib_j (\bOne, \bfnopt(\bfth))$ has no explicit expression. To make the analysis tractable, we slightly simply the optimal contribution scheme in \Cref{prop:optimalfullinfo} and consider the \text{simplified optimal contribution scheme} $(\collab^\star, \bfns)$ where $\Bstar=\Bopt=(1,\ldots,1)$ and for any $j\in\agents$,
    \begin{align}
        \nstar_j (\bfth)=\2{j\leq\Lstar} \maxcontrib_j (\bOne, \bfns(\bfth))\quad &\text{and}\quad\Lstar=\min\defEnsLigne{j\in\agents:\: \sum_{k\leq j}\maxcontrib_k (\bOne, \bfns(\bfth))\geq\Nbar}\label{def:simplifiedscheme}\eqsp,\\
        &\text{with}\quad\Nbar=(\noutside+1)J^{\frac{1}{1+\gamma}}-1\eqsp.\notag
    \end{align}
   Note that $(\Bstar,\bfns)$ only differs from $(\Bopt, \bfnopt)$ in two ways. First, in $(\Bstar, \bfns)$ all contributors'  participation constraint bind, while in $(\Bopt, \bfnopt)$ the $L^\opt$-the one could be slack. Second, the total number of data points required from the coalition is fixed and equal to $\Nbar=\Theta(J^{\frac{1}{1+\gamma}})$. The quantity $\Nbar$ comes from a natural relaxation of the original problem \Cref{master_problem_in_text2} where we leave aside an intricate term of the objective function. This relaxation, which is formally described in \Cref{appendix:relaxation},  provides a good approximation of the exact solution in reasonable settings. Indeed, the following result establishes that applying $(\Bstar, \bfns)$ instead of $(\Bopt, \bfnopt)$ comes at a negligible welfare cost when types are sufficiently evenly spaced.
   \begin{restatable}{lemma}{simplifiednegligiblewelfareloss}\label{lemma:approxwelfare}Assume \Cref{assumption_upper_bound_risk}, \Cref{definition_type} and $\type_j - \type_{j-1}=\bigO(1/J)$ for any $j\in\defEnsLigne{2,\ldots,J}$. Then, $$
W(\Bopt, \bfnopt(\bfth))=W(\collab^\star, \bfns(\bfth))+\bigO(J^{\frac{1}{1+\gamma}})\eqsp.$$
\end{restatable}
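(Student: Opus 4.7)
The plan is to exploit that $\Bopt=\Bstar=\bOne$, so comparing the two schemes reduces to comparing the sample allocations. Substituting the closed form of $\riskexcess$ from \eqref{eq:def_vareps_delta} into \eqref{definition_utility} and summing over $j\in\agents$ yields the clean decomposition
\begin{equation*}
W(\bOne,\bfn)=K-F(N)-2aJ\vartheta,\qquad F(N):=2aJ\alpha_{\delta}(1+N)^{-\gamma}+cN,
\end{equation*}
where $K$ depends only on $R^\star$ and $\beta$ and $(N,\vartheta)$ are as in \eqref{eq:def_N_theta}. A direct computation shows that $F$ is strictly convex on $\R_{+}$ with unique minimiser equal to $\Nbar$---this is precisely how $\Nbar$ arises in the simplified scheme---and satisfies $F(\Nbar)=\Theta(J^{1/(1+\gamma)})$ and $F''(\Nbar)=\Theta(J^{-1/(1+\gamma)})$.

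With this rewriting, the welfare gap decomposes as
\begin{equation*}
W(\bOne,\bfnopt)-W(\bOne,\bfns)=\bigl[F(N^\star)-F(N^\opt)\bigr]+2aJ\bigl(\vartheta^\star-\vartheta^\opt\bigr),
\end{equation*}
with the obvious shorthand for the quantities attached to each scheme, and both terms must be bounded from above. For the first, I would use that $F(N^\opt)\ge F(\Nbar)$ since $\Nbar$ is the unconstrained minimiser of $F$, together with the fact that $N^\star\in[\Nbar,\Nbar+\maxcontrib_{\Lstar}(\bOne,\bfns)]$ by definition of $\Lstar$. Since $\maxcontrib_j$ is uniformly $\bigO(1)$ (from \eqref{eq:def_n_contrib_j} together with the boundedness of types), a second-order Taylor expansion of $F$ around $\Nbar$ yields $F(N^\star)-F(\Nbar)=\bigO(F''(\Nbar))=\bigO(J^{-1/(1+\gamma)})$, which is negligible compared with the target rate.

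The heart of the argument lies in the coupling gap $2aJ(\vartheta^\star-\vartheta^\opt)$. By \Cref{prop:optimalfullinfo} and the greedy construction of $\bfns$, both schemes concentrate mass on the lowest-type agents, so $\vartheta^\star\le\type_{\Lstar}$ while $\vartheta^\opt\ge\type_1$. Under the spacing hypothesis $\type_j-\type_{j-1}=\bigO(1/J)$, this yields $\vartheta^\star-\vartheta^\opt\le\type_{\Lstar}-\type_1=\bigO(\Lstar/J)$, so it suffices to establish $\Lstar=\bigO(J^{1/(1+\gamma)})$. This in turn follows from a uniform lower bound $\maxcontrib_j(\bOne,\bfns)=\Omega(1)$ for $j\le\Lstar$: since $\sum_{j\le\Lstar}\maxcontrib_j(\bOne,\bfns)\le\Nbar+\bigO(1)$ by definition of $\Lstar$, one obtains $\Lstar=\bigO(\Nbar)=\bigO(J^{1/(1+\gamma)})$, hence $2aJ(\vartheta^\star-\vartheta^\opt)=\bigO(J\cdot J^{-\gamma/(1+\gamma)})=\bigO(J^{1/(1+\gamma)})$.

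The main obstacle is precisely this uniform lower bound on $\maxcontrib_j(\bOne,\bfns)$: the expression in \eqref{eq:def_n_contrib_j} ties each agent's participation cap to $\riskexcess(\bOne,\bfns)$, which itself depends on $(N^\star,\vartheta^\star)$---the very quantities we are trying to bound. I expect to close this loop via a bootstrap: the a priori estimates $N^\star\ge\Nbar$ and $\vartheta^\star\le\type_{\Lstar}$ yield an upper bound on $\riskexcess(\bOne,\bfns)$, which plugged back into \eqref{eq:def_n_contrib_j} keeps $\maxcontrib_j(\bOne,\bfns)$ bounded away from zero provided $\type_{\Lstar}-\type_j$ stays small---which is exactly what the spacing hypothesis ensures, thereby closing the circular dependency.
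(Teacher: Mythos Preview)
Your decomposition and the bound on the coupling term $2aJ(\vartheta^\star-\vartheta^\opt)$ via $\type_{\Lstar}-\type_1$ under the spacing hypothesis are exactly what the paper does. Two points where you are working harder than necessary:

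\emph{The $F$-term.} In the paper's convention the simplified scheme satisfies $N^\star=\Nbar$ exactly (this is how $\Nbar$ is \emph{defined} in the relaxation of Appendix~\ref{appendix:relaxation}, and it is used as an identity in the proof of \Cref{cor:simplifyingfullinfo}). Hence $F(N^\star)-F(N^\opt)=F(\Nbar)-F(N^\opt)\le 0$ directly, with no Taylor expansion needed. Your more careful treatment of $N^\star\in[\Nbar,\Nbar+\maxcontrib_{\Lstar}]$ is not wrong, just superfluous under the paper's reading of \eqref{def:simplifiedscheme}.

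\emph{The bootstrap.} The circular dependency you flag for $\Lstar=\bigO(J^{1/(1+\gamma)})$ is already closed by \Cref{cor:simplifyingfullinfo}: once $N^\star=\Nbar$ is fixed, summing the participation caps gives the explicit formula $\nstar_j=\Nbar/\Lstar-(2a/c)(\bar\theta_{\Lstar}-\type_j)$, and the uniform bound $0\le\nstar_j\le M$ (from \eqref{eq:gjbounded}) forces $\Nbar/\Lstar=\Theta(1)$, hence $\Lstar=\Theta(J^{1/(1+\gamma)})$. So you may simply invoke that corollary rather than set up a bootstrap.
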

   
   Moreover, the following proposition shows that  $\bfns(\bfth)$ admits a workable expression.
   
\begin{restatable}{corollary}{simplifyingfullinfo}\label{cor:simplifyingfullinfo}
        Assume \Cref{assumption_upper_bound_risk}, \Cref{definition_type} and  \Cref{assumption_types_sorted}. Then $\Lstar=\Theta (J^{\frac{1}{1+\gamma}})$ and for any $j\in\agents$,
$$
\nstar_{j}(\bfth) = \2{j\le \Lstar}\parentheseDeux{\frac{\Nbar}{\Lstar} + \frac{2a}{c}\parenthese{\type_j - \frac{1}{\Lstar}\sum_{\ell=1}^{\Lstar} \type_\ell}}\eqsp.
$$\end{restatable}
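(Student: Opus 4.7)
The plan is to exploit a structural property of $\maxcontrib_j$ that turns the self-referential definition of $\bfns$ into a linear system that the sum constraint solves explicitly.

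\emph{Step 1 (Linearity in $\type_j$).} Substituting \eqref{eq:def_vareps_delta} into \eqref{eq:def_n_contrib_j} evaluated at $\collab=\bOne$, the max contribution decomposes as
\begin{equation*}
\maxcontrib_j(\bOne,\bfn) \;=\; B\bigl(N(\bOne,\bfn),\avgtype(\bOne,\bfn)\bigr) \;+\; \tfrac{2a}{c}\,\type_j,
\end{equation*}
where $B(N,\vartheta)=\noutside+\tfrac{2a\alpha_\delta}{c}\bigl[(1+\noutside)^{-\gamma}-(1+N)^{-\gamma}\bigr]-\tfrac{2a}{c}\vartheta$ is independent of $j$. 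This affine-in-$\type_j$ structure is the key algebraic observation: once the pair $(N,\avgtype)$ induced by the scheme is known, the relative contributions across agents are completely determined.

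\emph{Step 2 (Solving the fixed point via the aggregate constraint).} The relaxation formally described in \Cref{appendix:relaxation} pins down the total coalition sample size at $\Nbar$, so $\sum_{j\leq \Lstar}\nstar_j(\bfth)=\Nbar$. Substituting the linear form of Step~1 (with $N=\Nbar$) and summing over $j\leq\Lstar$ gives
\begin{equation*}
\Nbar \;=\; \Lstar\, B(\Nbar,\avgtype) \;+\; \tfrac{2a}{c}\sum_{j\leq\Lstar}\type_j \quad\Longrightarrow\quad B(\Nbar,\avgtype) \;=\; \tfrac{\Nbar}{\Lstar} - \tfrac{2a}{c\Lstar}\sum_{\ell=1}^{\Lstar}\type_\ell.
\end{equation*}
Plugging this back into the expression of Step~1 yields the claimed closed form for $\nstar_j(\bfth)$. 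Note that this argument elegantly bypasses the need to compute $\avgtype$ directly: the sum constraint alone pins down the $j$-independent offset.

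\emph{Step 3 (Rate of $\Lstar$).} Under \Cref{assumption_types_sorted} the types lie in $[\typemin,\typemax]$, so the closed form gives $\bigl|\nstar_j-\Nbar/\Lstar\bigr|\leq \tfrac{2a}{c}(\typemax-\typemin)$ for every $j\leq\Lstar$. The non-negativity of the smallest $\nstar_j$ and the fact that the largest $\nstar_j$ is on the order of the individual max contribution $\noutside+O(1)$ sandwich the ratio $\Nbar/\Lstar$ between two positive constants depending only on $(a,c,\alpha_\delta,\gamma,\typemin,\typemax)$; positivity at the lower end uses the standing hypothesis $2a/c>(\gamma\alpha_\delta)^{-1}$ to ensure $\noutside>0$. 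Combining with $\Nbar=(\noutside+1)J^{1/(1+\gamma)}-1=\Theta(J^{1/(1+\gamma)})$ yields $\Lstar=\Theta(J^{1/(1+\gamma)})$.

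The main obstacle is Step~2, specifically the clean use of the relaxation: we must check that the construction in \Cref{appendix:relaxation} really forces the total $\sum\nstar_j$ to equal $\Nbar$ exactly, not merely up to a lower-order correction. If only approximate equality is available (e.g.\ the cumulative $\maxcontrib_k$ overshoots $\Nbar$ by at most one max contribution at the marginal index), the stated formula still holds for $j<\Lstar$ with a bounded correction at $j=\Lstar$ that does not affect the asymptotic conclusion in Step~3. The remaining linear algebra and bounding in Steps~1 and~3 are routine.
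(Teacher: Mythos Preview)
Your proposal is correct and takes essentially the same approach as the paper: both expand $\maxcontrib_j$ via \eqref{eq:def_vareps_delta}, use the aggregate constraint $\sum_{j\leq\Lstar}\nstar_j=\Nbar$ to pin down the $j$-independent part, substitute back to obtain the closed form, and then sandwich $\Nbar/\Lstar$ between positive constants using the type bounds $[\typemin,\typemax]$ to deduce $\Lstar=\Theta(J^{1/(1+\gamma)})$. The only cosmetic difference is that the paper first solves explicitly for $\vartheta(\bOne,\bfns(\bfth))$ and then plugs it back into $\maxcontrib_j$, whereas you short-circuit this by solving directly for the offset $B(\Nbar,\avgtype)$; your caveat about overshoot at $j=\Lstar$ is also prudent, though the paper's own proof simply takes $\sum_{j\leq\Lstar}\nstar_j=\Nbar$ as exact.
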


 Since $(\collab^\star, \bfns)$ correctly approximates the optimal scheme while being more tractable, we work with it in the remainder to lighten proofs.
\begin{assumption}\label{assumption:simplifiedscheme}
    The aggregator applies the simplified contribution scheme $(\collab^\star, \bfns)$.
\end{assumption}

%
\section{Hidden information}\label{section:hidden_information}The welfare-maximizing contribution scheme described in \eqref{def:simplifiedscheme} depends explicitly on $\bfth\in\types^J$, so it is implementable only if types are public. This often unrealistic, either for legal or competitive reasons. We therefore turn to the problem of setting up a collaboration when types are private.
\subsection{Naive aggregation and unravelling} \label{subsection_naive_aggregation}

A naive solution to coping with the private nature of $\bfth\in\types^J$ is for the aggregator to ask agents to disclose their types, and apply the simplified optimal contribution scheme defined in \eqref{def:simplifiedscheme}. In this setting, however, agents may declare a type $\tilde{\theta}_j$ different from their true type $\theta_j$. 

This approach corresponds to a direct-revelation mechanism $\Gamma:(\collab,\tbfth)\mapsto \bfns(\tbfth)$ which unfolds as follows.
\begin{enumerate}
    \item Any agent $j\in\agents$ declares a tuple $(B_j, \ttype_j)\in\defEns{0,1}\times\types\cup\defEns{\dagger}$. If $B_j = 1$, then agent $j$ picks \ref{item:option2}, and enters the coalition with type $\ttype_j$. If $B_j=0$, then agent $j$ picks \ref{item:option1}, their declared type $\ttype_j$ is $\dagger$ by convention.
 
    \item Setting $\msb = (B_1,\ldots,B_J)$ and $\tilde{\boldsymbol{\theta}} \in (\Theta \cup \{\dagger\})^J$,  then the aggregator applies the contribution scheme defined in \eqref{def:simplifiedscheme}, so the vector of number of contributions within the coalition is $\bfns (\tbfth)$. 
\end{enumerate}
$\Gamma$ induces a direct revelation game $\parentheseLigne{\agents, \strategyspace ^J, \parentheseLigne{v_j}_{j\in\agents}}$ where the action space is $\strategyspace = \ensembleLigne{(1,\ttype)}{\ttype\in\types}\union\defEns{(0,\dagger)}$ and payoffs are for any $j\in\agents$ and $\bfs\in\strategyspace^J$,
\begin{equation*}
    v_j(s_j, \bfs_{-j})=\utility_{j} (\collab, \bfns(\tbfth))= B_j\parentheseDeux{-a\parenthese{\Rstar + \riskexcess(\collab, \bfns(\tbfth))}-c\nstar_j(\tbfth)}+(1-B_j)\outside_j \eqsp.
\end{equation*}
This mechanism is obviously vulnerable to strategic manipulation, since it disregards incentive compatibility. This has severe consequences for the coalition, as shown by the following proposition.

\begin{restatable}[Unravelling]{theorem}{unravelling}\label{prop:unravelling}Assume \Cref{assumption_upper_bound_risk}, \Cref{definition_type}, \Cref{assumption_types_sorted},   and \Cref{assumption:simplifiedscheme}. Let $\nasheqspace\subset\strategyspace^J$ be the set of pure-strategy Nash equilibria of the game induced by $\Gamma$. We have\begin{enumerate}[label=(\roman*)]
    \item $\nasheqspace \ne \emptyset$
    \item at any
    $\sstar\in\nasheqspace$, $\collab=(0,\ldots,0)$ or
    $\collab=(0,\ldots,0,1)\eqsp.$
\end{enumerate} \end{restatable}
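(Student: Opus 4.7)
The proof splits into items (i) and (ii). For (i), I would exhibit a concrete PSNE, namely the all-outside profile. For (ii), I argue by contradiction that any coalition outside $\{\emptyset, \{J\}\}$ admits a profitable deviation, driven by the Akerlof-style tension between contributing (which is costly) and free-riding (which is enjoyed at zero sampling cost within the same coalition).

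For item (i), I verify that the profile $\bfs^{0} := ((0,\dagger),\ldots,(0,\dagger))$ is a PSNE. Every agent's utility under $\bfs^{0}$ equals $\outside_j$. Any unilateral deviation by agent $j$ to $(1,\ttype)$ makes them the sole coalition member; the scheme then assigns them some contribution $n'$, and the resulting utility $-a(\Rstar + \riskexcess(\type_j, n')) - cn'$ is at most $\outside_j$ by \Cref{prop:outsideoption}, since $\noutside$ uniquely maximizes $n \mapsto -a\riskexcess(\type_j,n) - cn$. Hence no unilateral deviation is profitable and $\bfs^{0} \in \nasheqspace$.

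For item (ii), assume toward contradiction that some $\bfs^\star \in \nasheqspace$ has $\cB^\star \notin \{\emptyset, \{J\}\}$. If $\cB^\star = \{j^\star\}$ with $j^\star < J$, then agent $J$ profitably deviates to $(1, \typemax)$: joining with the maximal declared type makes $J$ a free-rider in $\{j^\star, J\}$ with utility $-a(\Rstar + \riskexcess(\type_{j^\star}, n'))$, which strictly exceeds $\outside_J$ because $\type_{j^\star} < \type_J$ and $J$ also saves the cost $c\noutside$. If instead $|\cB^\star| \geq 2$, let $j_L \in \cB^\star$ be the member with the smallest declared type; by construction $j_L$ lies in the top-$\Lstar$ contributor set and pays $c\nstar_{j_L} > 0$. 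I then split into: (a) $|\cB^\star| > \Lstar$, so $j_L$ can re-declare $\typemax$, drop out of the contributor set, and free-ride---saving $c\nstar_{j_L}$ against only a bounded change in the coalition risk excess; or (b) $|\cB^\star| \le \Lstar$, so $j_L$ opts out entirely, which dominates because the scheme binds the participation constraint at the \emph{declared} type while the actual utility carries an extra penalty $-2a(\avgtype_{\mathrm{true}} - \avgtype_{\mathrm{declared}})$, which is negative for the lowest-true-type coalition member at any best-response profile. Both sub-cases contradict PSNE.

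\textbf{Main obstacle.} The most delicate step is case (ii)(a): quantitatively bounding the change in $\riskexcess$ when $j_L$'s share of $\Nbar$ is redistributed to the next agent in the sorted declared-type order. The required estimate combines $\riskexcess(\theta, n) = 2[\alpha_\delta(1+n)^{-\gamma} + \beta + \theta]$ with the orders of magnitude $\nstar_{j_L} \asymp \noutside$ and $N \asymp \Nbar = \Theta(J^{1/(1+\gamma)})$ from \Cref{cor:simplifyingfullinfo} to bound the shift in the true coalition average type by $O(\typemax/\Nbar)$, which is dominated by the cost savings $\Omega(c\noutside)$ for $J$ sufficiently large. Tie-breaking for coincident declared types, as well as the discontinuous jumps in $\Lstar$ induced by perturbations of $\ttype_{j_L}$, also need careful treatment, and the analogous argument in case (ii)(b) must be matched with the exact form of the binding $\nstar_{j_L}$ derived from Corollary~\ref{cor:simplifyingfullinfo}.
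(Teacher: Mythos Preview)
Your argument for part (i) is correct and matches the paper's.

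For part (ii), your approach diverges substantially from the paper's and has genuine gaps. The paper does \emph{not} search for explicit deviations case-by-case. Instead, it derives two structural lemmas about any PSNE with nonempty coalition $\cB$ and contributor set $\cC=\{j\in\cB:\nstar_j(\tbfth)>0\}$. First, using the identity
\[
v_j((1,\ttype_j),\bfs_{-j}) = \outside_j + 2a\bigl[(\type_j-\ttype_j)-(\vartheta-\tvt)\bigr],
\]
it shows that every contributor's ``lie gap'' $\type_j-\ttype_j$ equals the common value $\Delta=\vartheta-\tvt$ (if any were strictly larger, the weighted average could not equal $\Delta$). Second, it shows $v_j((1,\cdot),\bfs_{-j})$ is strictly decreasing in $\ttype_j$, so every contributor declares $\ttype_j=\typemin$. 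Combining these, all contributors have the same \emph{true} type, hence $|\cC|=1$ by \Cref{assumption_types_sorted}; the rest follows quickly.

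Your case (ii)(a) is problematic because the quantitative comparison you sketch---cost savings $c\nstar_{j_L}$ versus a risk-excess shift of order $a\,\nstar_{j_L}(\typemax-\typemin)/\Nbar$---is only favorable when $\Nbar>(2a/c)(\typemax-\typemin)$, which you yourself flag as needing ``$J$ sufficiently large.'' The theorem carries no such hypothesis, so this does not prove the statement as written.

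Your case (ii)(b) has a more basic gap. You claim the penalty $-2a(\vartheta-\tvt)$ makes the lowest-true-type contributor strictly worse off than $\outside_j$. But nothing in your argument excludes the configuration where every contributor lies by exactly the same amount, in which case $(\type_j-\ttype_j)=(\vartheta-\tvt)$ for all $j\in\cC$ and \emph{every} contributor is indifferent---no strictly profitable deviation to $(0,\dagger)$ exists, and your contradiction fails. Ruling out this indifference case is precisely what the paper's second lemma (all contributors must declare $\typemin$, hence share the same true type) accomplishes. Without an analogue of that step, your case split does not close.
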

\Cref{prop:unravelling} shows that under $\Gamma$, the coalition undergoes a full unravelling:  it is either empty or comprised solely of the worst agent in any Nash equilibrium. Thus, collaborative learning is not immune to adverse selection, and may suffer from unravelling as any market characterized by information asymmetry.
\begin{proof}[Sketch of proof]
The profile of actions $((0,\dagger),\ldots,(0,\dagger))$ corresponding to $\collab=(0,\ldots,0)$ is a pure Nash equilibrium, since forming a lone coalition cannot bring more utility than picking the outside option. Conversely, consider a pure-strategy Nash equilibrium $\bs\in\cE$ such that $\collab\ne(0,\ldots,0)$. Denote by $\cC=\defEnsLigne{j\in\agents:\:B_j = 1\text{   and  
   } \nstar_j(\tbfth)>0}$ the set of  contributors under this equilibrium. It can be shown that (i) for any $(j,k)\in \cC^2$, $\type_j - \ttype_j = \type_k - \ttype_k$, and (ii) For any $j\in \cC$, $(1,\ttype_j)$ with $\ttype_j > \typemin$ is strictly dominated by  $(1,\typemin)$ so $\ttype_j = \typemin$ at the equilibrium. As a consequence, $\type_j = \type_k$ for any $(j,k)\in \cC^2$, which implies by \Cref{assumption_types_sorted} that $\abs{\cC}=1$. From the definition of the contribution scheme \eqref{def:simplifiedscheme}, we can deduce that $\sum_{j\in\agents}B_j = \abs{\cB}=1$, because $\abs{\cB}>1$ would entail $\abs{\cC}>1$. Finally, $B_J = 1$ because $v_J ((1,\typemin),\bs _{-J}) > v_J ((0,\dagger),\bs _{-J})$ and $\bs$ is a Nash equilibrium. This leads to $\collab = (0,\ldots,0,1)$.
\end{proof}

\subsection{Breaking unravelling}
The previous results motivate the design of a more sophisticated aggregation scheme that addresses adverse selection. In this section, we discuss how to design such a procedure. 
\paragraph{Is VCG available in our framework?}Unravelling occurs under $\Gamma$ because agents do not find it beneficial to declare their true type and eventually opt for their outside option. This could be avoided by modifying $\Gamma$ to make it
\begin{enumerate}[label=(\roman*)]
    \item \textit{individually rational}, that is $\text{$v_j((1,\type_j),\bfs_{-j})\geq v_j ((0,\dagger),\bfs_{-j})$ for any $j\in\agents$, $\bfs_{-j}\in\strategyspace^{J-1}$}$,
    \item and \textit{incentive compatible}, that is $\text{$v_j ( (1,\type_j), \bfs_{-j})\geq v_j ((1,\ttype_j),\bfs_{-j})$ for any $\ttype_j\in\types$}$.
\end{enumerate}
Under these conditions, the truthful, optimal profile of actions $((1,\type_1),\ldots,(1,\type_J))$ would emerge as a Nash equilibrium. 
Since the aggregator seeks to minimize the utilitarian function $W$, one option would be to rely on the VCG mechanism ~\citep{v61, c71, g73}, which is the direct-revelation mechanisms fulfilling these desiderata \citep{gl77, h79}. Formally, the VCG mechanism writes $\Gamma^{\textsc{vcg}}:\tbfth \mapsto (\bfns(\tbfth), \mathbf{t}(\tbfth))$ where $\mathbf{t}(\tbfth)=(t_1 (\bfth),\ldots, t_J (\bfth))\in\R^J$ is a set of transfers satisfying for any $j\in\agents$:\begin{equation*}
    t_j(\tbfth) = \sum_{k\ne j}u_k((0,\bOne_{-j}), \bfns(\tbfth))-\sum_{k\ne j}u_k (\bOne, \bfns(\tbfth))\eqsp.
\end{equation*}It re-establishes truthfulness as a dominant strategy by aligning individual payoffs $v_j ^{\textsc{VCG}}(\tbfth)=u_j (\bOne, \bfns(\tbfth))-t_j (\tbfth)$ with total social welfare. Unfortunately, the VCG approach is unavailable in our framework, because of the following observation.
\begin{restatable}{lemma}{positivepaymentvcg}\label{lemma:positivepaymentvcg}
    There exists $j\in\agents$ such that $-t_j (\tbfth)>0$.
\end{restatable}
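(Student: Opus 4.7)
The plan is to take $j = 1$, the agent with the strictly smallest type by \Cref{assumption_types_sorted}, and show $-t_1(\tbfth) > 0$ by a direct computation. The first step is to simplify $\sum_{k \ne 1} u_k(\bOne, \bfns(\tbfth))$ under the simplified scheme. Every contributor $k \le \Lstar$ has its participation constraint binding, so $u_k(\bOne, \bfns(\tbfth)) = \outside_k = -a[\Rstar + \riskexcess(\bOne, \bfns(\tbfth))] - c\nstar_k$; every non-contributor $k > \Lstar$ satisfies $u_k(\bOne, \bfns(\tbfth)) = -a[\Rstar + \riskexcess(\bOne, \bfns(\tbfth))]$. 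Summing over $k \ne 1$ and collecting the common $-a(\Rstar + \riskexcess)$ factor yields
\[
\sum_{k \ne 1} u_k(\bOne, \bfns(\tbfth)) \;=\; (J-1)\bigl[-a(\Rstar + \riskexcess(\bOne, \bfns(\tbfth)))\bigr] \;-\; c\sum_{k=2}^{\Lstar} \nstar_k.
\]

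Next, I would repeat this computation for the counterfactual $\sum_{k \ne 1} u_k((0, \bOne_{-1}), \bfns(\tbfth))$---where agent $1$ is removed from the coalition but the contributions $\nstar_k$ for $k \ge 2$ are unchanged---which produces the same expression with $\riskexcess$ reevaluated on the reduced coalition. Subtracting, the $c$-penalty cancels and one obtains the closed form
\[
-t_1(\tbfth) \;=\; a(J-1)\bigl[\riskexcess((0, \bOne_{-1}), \bfns(\tbfth)) - \riskexcess(\bOne, \bfns(\tbfth))\bigr].
\]

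The final step is to show that this bracket is strictly positive, i.e.\ that removing agent 1 strictly raises the coalition's risk excess. Recalling $\riskexcess = 2[\alpha_\delta(1+N)^{-\gamma} + \beta + \avgtype]$ from \eqref{eq:def_vareps_delta}--\eqref{eq:def_N_theta}, removing agent 1 (i) strictly decreases the total number of samples $N$ by $\nstar_1 > 0$, and (ii) strictly increases the weighted average type $\avgtype = N^{-1}\sum_k \nstar_k \type_k$, since $\type_1$ is the strict minimum of the type profile (\Cref{assumption_types_sorted}) weighted by $\nstar_1 > 0$. Both effects push $\riskexcess$ upward, yielding the strict inequality and hence $-t_1 > 0$. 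The main obstacle is verifying $\nstar_1 > 0$: using the explicit formula from \Cref{cor:simplifyingfullinfo}, this reduces to checking $\Nbar/\Lstar > (2a/c)\bigl(\Lstar^{-1}\sum_{\ell=1}^{\Lstar} \type_\ell - \type_1\bigr)$, which holds in the regime prescribed by our assumptions (moderate type spread, large $J$).
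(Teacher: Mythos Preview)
Your approach is correct and follows essentially the same line as the paper's proof, but carried out more explicitly. The paper rewrites
\[
t_j(\tbfth) \;=\; \underbrace{W((0,\bOne_{-j}),\bfns(\bfth))-W(\bOne,\bfns(\bfth))}_{(\mathrm{I})}\;-\;\underbrace{\bigl[u_j((0,\bOne_{-j}),\bfns(\bfth))-u_j(\bOne,\bfns(\bfth))\bigr]}_{(\mathrm{II})},
\]
takes any contributor $j\in\iint{1}{\Lstar}$, shows $(\mathrm{II})=0$ via the binding participation constraint, and then asserts $(\mathrm{I})<0$ by invoking the optimality of the grand coalition (\Cref{prop:optimalfullinfo}). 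Your route skips the welfare decomposition and computes $-t_1$ directly, landing on the same core inequality $\riskexcess((0,\bOne_{-1}),\bfns)>\riskexcess(\bOne,\bfns)$.

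Two remarks on the comparison. First, your choice $j=1$ is actually cleaner than the paper's ``any $j\le\Lstar$'': since $\type_1$ is the strict minimum among contributors, removing agent~$1$ strictly raises \emph{both} the variance-like term $(1+N)^{-\gamma}$ and the bias term $\avgtype$, so positivity is immediate. For a generic contributor $j$ with $\type_j>\avgtype$ the $\avgtype$ term would move the wrong way, and the paper's appeal to \Cref{prop:optimalfullinfo} (which concerns $\bfnopt$, not $\bfns$) is not a literal citation---your explicit $\riskexcess$ comparison is what really proves it. Second, the one place where your argument is slightly informal is the check that $\nstar_1>0$; the paper sidesteps this only by tacitly assuming contributors have positive $\maxcontrib_j$, so you are not losing anything relative to the paper. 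If you want to tighten this, note that $\nstar_1=\maxcontrib_1(\bOne,\bfns)$ by \eqref{def:simplifiedscheme}, and $\maxcontrib_1>0$ is precisely the condition that agent~$1$'s participation constraint is not already violated at zero contribution---implied by the scheme being well-defined.
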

\Cref{lemma:positivepaymentvcg} shows that some agent would need to receive a strictly positive transfer. This is impossible without a monetary payment--utility can only be decreased by the aggregator, for instance through \textit{accuracy shaping} \citep{kwj22}--, which we exclude here.

\paragraph{A probabilistic verification-based mechanism.}We now show how to design a mechanism that recovers the optimal collaboration as a Nash equilibrium in high probability without the need for transfers. Inspired by the probabilistic verification approach \citep{cie12, fv18, bk19}, we assume that the aggregator can approximately estimate $\type_j$ with few samples from $P_j$ for any $j\in\agents$:

\begin{assumption}
    \label{assumption:typeestimator}There exists a decreasing function $\eta_{\delta}:\: \R_+ ^\star \rightarrow \R_+ ^\star$, with $\delta\in(0,1)$ defined in \Cref{assumption_upper_bound_risk}, such that for any $j\in\agents$ and i.i.d samples $(X^j _1, Y^j _1),\ldots, (X^j _q, Y^j _q)$ of size $q>0$ from $P_j$, there exists a $(X^j _1, Y^j _1),\ldots, (X^j _q, Y^j _q)$-measurable estimator $\htype_j$  satisfying
    $$\P\parentheseLigne{\absLigne{\widehat{\type}_j - \type_j}\leq \eta_{\delta}(q)}\geq 1-\delta\eqsp.$$\end{assumption}In \Cref{para:statisticalimplementation}, we show how the aggregator can design such estimators. \Cref{assumption:typeestimator} allows us to consider a new mechanism $\widehat{\Gamma}:\:\collab\mapsto\bfm(\collab)$ as follows:
\begin{enumerate}
    \item any agent $j\in\agents$ declares $B_j \in\defEnsLigne{0,1}$. If $B_j = 1$, the principal asks for $\underline{q} \leq \noutside - 2(a/c)(\typemax-\typemin)$ i.i.d samples from $P_j$ and estimates types as $\hbfth=(\htype_j)_{j\in\cB}$ following \Cref{assumption:typeestimator}.
    \item Based on the estimated types $\hbfth=(\htype_j)_{j\in\cB}$, the aggregator asks for $\parentheseDeuxLigne{\,\nstar_j (\hbfth+\bfeta_j)-\qubar\,}_{+}$ additional samples from $P_j$, where $\bfns(\centraldot)$ is defined as in \eqref{def:simplifiedscheme} and
    $$
       \bfeta_j = \etadl\bOne - 2\boldsymbol{\delta}_j \etadl , \quad\text{with}\quad \boldsymbol{\delta}_j = (0,\ldots,0,1,0,\ldots,0)^{\transpose} \eqsp.
    $$
    
    Thus, the number of draws required from agent $j$  is $\max[\qubar\,,\,\nstar_j(\hbfth + \bfeta_j)]$.
    \item The aggregator keeps
$m_j (\hbfth) = \1\defEnsLigne{\nstar_j (\hbfth + \bfeta_j)>0}\max[\qubar\,,\,\nstar_j(\hbfth + \bfeta_j)]$ samples from agent $j$, and trains a collaborative model with these pooled samples.
    \end{enumerate} 

$\hat{\Gamma}$ induces a game $\parentheseLigne{\agents, \defEnsLigne{0,1} ^J, \parentheseLigne{\hv_j}_{j\in\agents}}$ where any agent $j\in\agents$ has a payoff function
\begin{equation*}
    \hv_j: (B_j, \collab_{-j})\mapsto  B_j\parentheseDeux{-a\parenthese{\Rstar + \riskexcess(\collab, \bfm(\hbfth))}-c\max[\qubar\,,\,\nstar_j(\hbfth + \bfeta_j)]}+(1-B_j)\outside_j \eqsp.
\end{equation*}

The rationale behind this mechanism is fairly intuitive: since $\hbfth$ is a correct estimate of $\bfth$, $ \nstar_j(\hbfth + \bfeta_j)\approx \nstar_j (\hbfth)$ correctly approximates the optimal contribution $\nstar_j (\bfth)$ for any contributor~$j\in\cB$. Note that type estimates are purposely biased by $\bfeta_j$ when asking for samples. This is a safeguard against over-estimated types, which would lead to asking to many data points and could deter agents from participating in the coalition.

Critically, $\bfm(\hbfth)$ does not depend on declared type, so agents are no longer able to strategically manipulate the mechanism. Moreover, $\hat{\Gamma}$ does not require agents to know their own types, which would be an unrealistic assumption. Finally, observe that the number of data points asked to estimate types $\qubar$ is low enough to never deter agents from participating in the coalition.

\begin{restatable}{theorem}{truthfulnessisnash}
\label{theorem:truthfulnessisnash}Assume \Cref{assumption_upper_bound_risk}, \Cref{definition_type},  \Cref{assumption_types_sorted},  \Cref{assumption:simplifiedscheme} and    \Cref{assumption:typeestimator}. $\Bstar=(1,\ldots,1)^{\transpose}$ is a Nash equilibrium under $\hat{\Gamma}$ with probability $1-\delta$.
\end{restatable}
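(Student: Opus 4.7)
The plan is to reduce the Nash condition for $\Bstar=\bOne$ to a family of single-agent participation inequalities, and verify each of them on a high-probability event provided by Assumption~\ref{assumption:typeestimator}. First I would introduce the good event
\begin{equation*}
\mathcal{E} := \{\,|\htype_j-\type_j|\le \etadl \text{ for all } j\in\agents\,\}\eqsp;
\end{equation*}
Assumption~\ref{assumption:typeestimator} (with a union bound over $\agents$, absorbed into $\delta$) gives $\P(\mathcal{E})\ge 1-\delta$, and the remainder of the proof is carried out pointwise on $\mathcal{E}$. Since $\hv_j(0,\bOne_{-j})=\outside_j$, the profile $\bOne$ is a pure Nash equilibrium iff $\hv_j(1,\bOne_{-j})\ge \outside_j$ for every $j\in\agents$; expanding both payoffs and using the definition~\eqref{eq:def_n_contrib_j} of $\maxcontrib_j$, this is equivalent to the single-agent participation constraint
\begin{equation*}
m_j(\hbfth) \,\le\, \maxcontrib_j\!\parentheseLigne{\Bstar,\bfm(\hbfth)}\eqsp .
\end{equation*}

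I would then verify this inequality on $\mathcal{E}$ by a short case analysis. If $m_j(\hbfth)=\qubar$, the built-in bound $\qubar\le\noutside-2(a/c)(\typemax-\typemin)$ combined with the worst-case inequalities $\type_j-\avgtype\ge-(\typemax-\typemin)$ and $(1+N)^{-\gamma}\le(1+\noutside)^{-\gamma}$ (the latter valid whenever $N\ge\noutside$, i.e., as soon as the grand coalition is active) delivers participation by a direct plug-in into~\eqref{eq:def_n_contrib_j}. Otherwise $m_j(\hbfth)=\nstar_j(\hbfth+\bfeta_j)$, and I would invoke the closed form in Corollary~\ref{cor:simplifyingfullinfo}. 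On $\mathcal{E}$, the asymmetric construction of $\bfeta_j$ forces $(\hbfth+\bfeta_j)_j\le\type_j$ and $(\hbfth+\bfeta_j)_k\ge\type_k$ for $k\ne j$, so the perturbed centered type obeys
\begin{equation*}
(\hbfth+\bfeta_j)_j - \frac{1}{\Lstar}\sum_{\ell\le\Lstar}(\hbfth+\bfeta_j)_\ell \,\le\, \type_j - \frac{1}{\Lstar}\sum_{\ell\le\Lstar}\type_\ell + \mathcal{O}(\etadl/\Lstar)\eqsp ,
\end{equation*}
from which Corollary~\ref{cor:simplifyingfullinfo} yields $\nstar_j(\hbfth+\bfeta_j)\le\nstar_j(\bfth)+\mathcal{O}(\etadl/\Lstar)$. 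Since the full-information participation constraint binds, $\nstar_j(\bfth)=\maxcontrib_j(\Bstar,\bfns(\bfth))$, so $m_j$ exceeds the full-information bound only by a vanishing amount. A short Lipschitz argument then transfers the inequality from $\maxcontrib_j(\Bstar,\bfns(\bfth))$ to $\maxcontrib_j(\Bstar,\bfm(\hbfth))$, closing the participation constraint.

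The main obstacle is the self-referential nature of the participation inequality: $\bfm(\hbfth)$ enters both sides of the constraint---on the left directly as $m_j$, and on the right through $\avgtype(\Bstar,\bfm)$ and $N(\Bstar,\bfm)$ inside $\maxcontrib_j$. One must therefore propagate the $\etadl$-level perturbation introduced by each $\bfeta_k$ consistently across $k\in\agents$, and show that the safety margin built into $\bfeta_j$---under-estimating $\type_j$ while over-estimating the other types---dominates the cumulative effect of the estimation errors $(\htype_k-\type_k)_{k\in\agents}$ on the weighted average type $\avgtype(\Bstar,\bfm)$ and the total sample count $N(\Bstar,\bfm)$, rather than just in isolation for agent $j$.
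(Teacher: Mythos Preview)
Your high-level plan---work on the good event $\mathcal{E}$, reduce the Nash condition to a single-agent participation inequality, and split into the $\qubar$ case and the contributor case---matches the paper. The $\qubar$ case is essentially the paper's argument for non-contributors. The divergence is in the contributor case, and there your proposal is weaker than what the paper does.

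You propose to compare $\nstar_j(\hbfth+\bfeta_j)$ to $\nstar_j(\bfth)$ via the closed form in Corollary~\ref{cor:simplifyingfullinfo}, obtain a slack of order $\etadl/\Lstar$, and then absorb that slack through a Lipschitz transfer from $\maxcontrib_j(\Bstar,\bfns(\bfth))$ to $\maxcontrib_j(\Bstar,\bfm(\hbfth))$. This introduces error terms at two separate steps and, as you yourself note in your final paragraph, requires showing that the safety margin in $\bfeta_j$ dominates the \emph{cumulative} effect of all the $\bfeta_k$'s on $\avgtype$ and $N$. You flag this as the ``main obstacle'' but do not close it; the $\mathcal{O}(\etadl/\Lstar)$ term is not vanishing for fixed $J$, and the Lipschitz transfer is not carried out. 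As written, this is a genuine gap.

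The paper avoids the obstacle altogether by a direct expansion. For a contributor $j$, one writes
\[
\hv_j(1,\bOne_{-j})=\outside_j+2a\Bigl[(1-\lambda^\star_j)\bigl(\type_j-(\htype_j-\etadl)\bigr)-\sum_{\ell\in\cC\setminus\{j\}}\lambda^\star_\ell\bigl(\type_\ell-(\htype_\ell+\etadl)\bigr)\Bigr],
\]
an \emph{exact} identity obtained by plugging $\nstar_j(\hbfth+\bfeta_j)=\maxcontrib_j$ (evaluated at the biased estimates) into the payoff and rearranging. On $\mathcal{E}$, $\type_j\ge\htype_j-\etadl$ and $\type_\ell\le\htype_\ell+\etadl$ for $\ell\neq j$, so each summand has the right sign and $\hv_j(1,\bOne_{-j})\ge\outside_j$ follows with no slack to manage and no self-referential loop. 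In other words, the asymmetric bias $\bfeta_j$ is calibrated so that the payoff identity itself separates into sign-definite pieces; you do not need a perturbation comparison with the full-information scheme at all. Replacing your Corollary~\ref{cor:simplifyingfullinfo}/Lipschitz route with this direct computation closes the gap.
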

\Cref{theorem:truthfulnessisnash} shows that the optimal  coalition is a sustainable equilibrium under $\hat{\Gamma}$, which effectively breaks unravelling: the set of (approximate) Nash equilibria is no more reduced to profiles of actions where the coalition is empty, or reduced to the worst agent. 
\paragraph{Practical implementation.}\label{para:statisticalimplementation}We now explain how to practically implement $\widehat{\Gamma}$ in a collaborative learning setting. This requires defining a collection of estimators $(\htype_j)_j$ satisfying \Cref{assumption:typeestimator}. To this end, we assume that few  samples from the target distribution  are available.
\begin{assumption}
    \label{assumption_sample_from_Pzero}
    There are $\qprime >0$ i.i.d samples $\defEns{(X^0 _1 , Y^0 _1),\ldots, (X^0 _{\qprime} , Y^0 _{\qprime})}$ from $P_0$ available to the  aggregator and agents.
  \end{assumption}Under \Cref{assumption_sample_from_Pzero}, define $\emprisk{g}{0} = q^{\prime -1}\sum_{i=1}^{\qprime}\ell(g(X^0 _i), Y^0 _i)$ for $g\in\hypoclass$. 
This allows us to devise suitable estimators $\htype_j$ as follows. 
\begin{restatable}{proposition}{estimatorissuitable}\label{prop:estimatorissuitable}
  Assume \Cref{assumption_upper_bound_risk}, \Cref{definition_type} and \Cref{assumption_sample_from_Pzero}. For any $j\in\agents$ the estimator $$\htype^{\textsc{erm}}_{0,j}=\sup_{g\in\hypoclass}\absLigne{\emprisk{g}{j}-\emprisk{g}{0}}\eqsp,$$ satisfies  \Cref{assumption:typeestimator} with \begin{equation}\label{eq:valueta}
      \eta_{\delta}(q) = \alpha_{\delta/4}\parentheseDeux{(q+1)^{-\gamma}+(\qprime+1)^{-\gamma}} + 2\beta\eqsp.
  \end{equation}  
\end{restatable}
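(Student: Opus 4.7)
The plan is to leverage \Cref{assumption_upper_bound_risk} as a uniform deviation bound (the examples following the assumption, notably the Rademacher-based bound with $\beta=2\textsc{Rad}(\hypoclass)$, are inherently uniform in $g$) and to show that the error $|\htype^{\textsc{erm}}_{0,j} - \type_j|$ is dominated by the sum of two uniform empirical-to-true risk deviations, one on samples from $P_j$ and one on samples from $P_0$. From there, a union bound combined with two applications of \Cref{assumption_upper_bound_risk} yields the claim.

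First I would establish the deterministic inequality
$$
|\htype^{\textsc{erm}}_{0,j} - \type_j| \;\leq\; \sup_{g\in\hypoclass}|\emprisk{g}{j} - \risk{g}{j}| \;+\; \sup_{g\in\hypoclass}|\emprisk{g}{0} - \risk{g}{0}| \eqsp.
$$
This combines three elementary facts: (i) $|\sup_g F(g) - \sup_g G(g)| \leq \sup_g |F(g) - G(g)|$ applied with $F(g)=|\emprisk{g}{j} - \emprisk{g}{0}|$ and $G(g)=|\risk{g}{j} - \risk{g}{0}|$; (ii) the reverse triangle inequality $\bigl||a|-|b|\bigr|\leq|a-b|$ with $a=\emprisk{g}{j}-\emprisk{g}{0}$ and $b=\risk{g}{j}-\risk{g}{0}$, giving
$$
\sup_g\bigl||\emprisk{g}{j}-\emprisk{g}{0}|-|\risk{g}{j}-\risk{g}{0}|\bigr|\;\leq\;\sup_g\bigl|(\emprisk{g}{j}-\risk{g}{j})-(\emprisk{g}{0}-\risk{g}{0})\bigr| \eqsp;
$$
and (iii) one further triangle inequality inside the sup to split into the two terms above.

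Second, I would invoke \Cref{assumption_upper_bound_risk} uniformly over $\hypoclass$ separately for $P_j$ with sample size $q$ and for $P_0$ with sample size $\qprime$, each with confidence parameter $\delta/4$. A union bound over these two events (and, if one reads \Cref{assumption_upper_bound_risk} as one-sided, over the two sides as well, which is precisely what the $\delta/4$ budget accommodates) yields, with probability at least $1-\delta$ jointly:
$$
\sup_{g\in\hypoclass}|\emprisk{g}{j} - \risk{g}{j}| \leq \frac{\alpha_{\delta/4}}{(q+1)^{\gamma}} + \beta, \qquad \sup_{g\in\hypoclass}|\emprisk{g}{0} - \risk{g}{0}| \leq \frac{\alpha_{\delta/4}}{(\qprime+1)^{\gamma}} + \beta \eqsp.
$$
Substituting into the deterministic inequality gives $|\htype^{\textsc{erm}}_{0,j} - \type_j|\leq \eta_\delta(q)$ with probability at least $1-\delta$, and monotonicity (decrease) of $\eta_\delta$ in $q$ is immediate since only the first term depends on $q$.

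The main obstacle is really a presentational one: \Cref{assumption_upper_bound_risk} as written is stated for fixed $g$, but must be applied uniformly in $g$ to bound the two suprema appearing above. This is consistent with the examples given in the paper after the assumption (where $\beta$ already encodes the complexity of $\hypoclass$), so the step is legitimate but deserves to be spelled out explicitly. The remainder of the argument is a clean chain of triangle-type inequalities together with a single union bound, with no delicate concentration work required.
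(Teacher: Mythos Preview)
Your proposal is correct and follows essentially the same approach as the paper: control $|\htype^{\textsc{erm}}_{0,j}-\type_j|$ by the two uniform deviations $\sup_{g}|\emprisk{g}{j}-\risk{g}{j}|$ and $\sup_{g}|\emprisk{g}{0}-\risk{g}{0}|$, then apply \Cref{assumption_upper_bound_risk} at level $\delta/4$ to each and union bound. Your packaging via the single deterministic inequality $|\sup F-\sup G|\leq\sup|F-G|$ together with the reverse triangle inequality is slightly tidier than the paper's, which bounds the two directions $\htype^{\textsc{erm}}_{0,j}-\type_j$ and $\type_j-\htype^{\textsc{erm}}_{0,j}$ separately before combining them, but the substance is identical, including the observation that \Cref{assumption_upper_bound_risk} must be read uniformly in $g$.
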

\Cref{prop:estimatorissuitable} shows that the empirical version of the $\hypoclass$-divergence defined in \Cref{definition_type} correctly estimate types. Note that the tighter the PAC bound in \Cref{assumption_upper_bound_risk}, the better the approximation term in \eqref{eq:valueta}. The type estimator $\htype_{0,j} ^{\textsc{ERM}}$ defined in \Cref{assumption:typeestimator} can easily be computed in  the classification case, as shown with the following example. 
\begin{assumption}[Classification setting]\label{assumption:classification}
$\cY=\defEns{-1,1}$, $\ell=\ell_{0,1}:(y,y')\in\cY\times\cY\mapsto \2{yy'<0}$, and $\hypoclass$ is a symmetric ($g\in\hypoclass$ if and only if $-g\in\hypoclass$) class of classifiers.
\end{assumption}

\begin{restatable}{example}{exampleclassif}\label{example:classif}Assume  \Cref{assumption_upper_bound_risk}, \Cref{definition_type}, \Cref{assumption_sample_from_Pzero} and \Cref{assumption:classification}.
\begin{enumerate}[wide, labelwidth=!, labelindent=0pt, label=(\roman*)]
    \item Denoting $\emprisk{g}{j^{-}}= n_j ^{-1}\sum_{i=1}^{n_j}\ell_{0,1}(g(X^j _i), -Y^j _i)$, we have $$
    \htype^{\textsc{erm}}_{0,j} = 1-\inf_{g\in\hypoclass}\defEns{\emprisk{g}{0}+\emprisk{g}{j^{-}}}\eqsp.$$
    \item In \Cref{assumption_upper_bound_risk}, assume $\alpha_{\delta}=\ln(1/\delta)^{1/2}$, $\beta=2\textsc{RAD}(\hypoclass)$ and $\gamma=1$ \citep{b04}. With $\htype_{0,j} ^{\textsc{erm}}$ defined in \Cref{prop:estimatorissuitable}, we have \begin{align*}\eta_{\delta/J}(q)=\ln(4J / \delta)^{1/2}[(1+q)^{-\gamma}+(1+\qprime)^{-\gamma}] + 2\mathrm{Rad}(\hypoclass)\eqsp.\end{align*}
\end{enumerate}
\end{restatable}
\Cref{example:classif} shows that in the classification case, it is sufficient to flip the labels of the data received from each contributor, merge these samples with those from $P_0$, and perform an empirical risk minimization to compute $\htype_{0,j}^\textsc{erm}$. The approximation error grows no more than logarithmically with the number of agents, while decreasing at rate $\gamma$ with the number of samples used in the estimation.

%
\section{Conclusion}
In this work, we show that information asymmetry has deleterious consequences when strategic agents try to learn a collaborative model. More precisely, under a naive aggregation procedure, the ignorance of others' data quality leads the coalition of learners to be either empty or reduced to the lowest-quality agent. We introduce a transfer-free mechanism based on estimation of types. This effectively counteracts unravelling by letting the grand coalition ranks among the approximate Nash equilibria with high probability. 

Several possible extensions can be considered. First, it would be interesting to relax the assumption that all agents have the same ratio $a/c$ in their utility, and see how heterogeneity affects the results. Second, the mechanism presented in \Cref{section:hidden_information} aims for individual rationality. A more desirable, yet difficult to achieve, property would be core stability, to ensure that no group of agents would benefit from a coordinated deviation, i.e., forming an alternative coalition. Finally, it would be interesting to check whether there exist mechanisms where the optimal collaboration not only emerges as a Nash equilibrium, but as a dominant equilibrium under imperfect information.

\section*{Acknowledgements}
Funded by the European Union (ERC, Ocean, 101071601). Views and opinions expressed are however those of the author(s) only and do not necessarily reflect those of the European Union or the European Research Council Executive Agency. Neither the European Union nor the granting authority can be held responsible for them.

\bibliographystyle{plainnat}
\bibliography{sample}
\appendix
\section{Relaxation of the optimal aggregation problem.}\label{appendix:relaxation}

Let $\collab\in\defEnsLigne{0,1}^J$ and the associated coalition $\cB = \defEnsLigne{j\in\agents:\:B_j = 1}$ be fixed. We present in this appendix a natural relaxation of problem \Cref{master_problem_in_text2} which motivates the choice $\sum_{j\in\cB}\nstar_j (\bfth)=\Nbar=(\noutside + 1)\abs{\cB}^{\frac{1}{1+\gamma}}-1$ in the simplified contribution scheme $(\Bstar, \bfns)$. The optimal aggregation problem with $\collab$ fixed is
\begin{equation*}
\quad\text{maximize}\quad \bfn\in\R^J\,\mapsto\,
\sum_{j\in\agents}u_j(\collab,\bfn)
\quad \text{subject to}\quad \begin{cases}
        \max_{j\in\cB}n_j - \maxcontrib(\collab, \bfn) \leq 0 \\
        \max_{j\in\agents}-n_j \leq 0 \eqsp,
    \end{cases}
\end{equation*}
Since agents outside of the coalition maximize their utility, the problem can equivalently be stated as
\begin{equation*}
(\cT_\collab):\quad\text{maximize}\quad \bfn\in\R^{\abs{\cB}}\,\mapsto\,\sum_{j\in\cB}u_j (\collab, \bfn) + \zeta_{\collab}
\quad \text{subject to}\quad \bfn \in\Xi_{\collab}\eqsp,
\end{equation*}
where
\begin{equation}\label{def:xib}
\Xi_\collab = \defEns{\bfn\in\R_{+}^{\cB}:\:\begin{array}{cc}
    \max_{j\in\cB}\defEns{n_{j} - \maxcontrib_j(\collab,\bfn)} \leq 0   \\
    \max_{j\in\cB}-n_{j} \leq 0
\end{array}}\quad\text{and}\quad \zeta_\collab = \sum_{j\notin \cB}\outside_j\eqsp.  
\end{equation}

We start by rewriting $(\cT_\collab)$ in a more convenient way. Instead of working with $\bfn\in\R_+ ^{\abs{\cB}}$ directly, we make appear (i) the total number of samples and (ii) the sharing out of samples between agents. Formally, for any $\bfn\in\R^{\abs{\cB}} _+$ consider $N=\bfn^{\transpose}\collab$ and $\blamb =N^{-1}\bfn\in\Delta_{\abs{\cB}}$, where $\Delta_{\abs{\cB}}$ is the simplex of dimension $\abs{\cB}$. Observe that the average type in the coalition reads  $\avgtype=\blamb^{\transpose}\bfth$. Moreover, the welfare evaluated in $(\collab, \bfn)$ rewrites
\begin{align*}
    \sum_{j\in\agents}u_j(\collab, \bfn) &=\sum_{j\in\agents}B_j \parentheseDeux{-a\parenthese{\Rstar+\eps(\collab, \blamb N)}-cn_{j}} + \sum_{j\in\agents}(1-B_j)\outside_j\\ &=-a\abs{\cB}\parenthese{\Rstar+\eps(\collab, \blamb N)} -cN +\zeta_{\collab}\eqsp,\\
    &= -a\abs{\cB}\parentheseDeux{\Rstar+2\parenthese{\alpha_{\delta}(1+N)^{-\gamma}+\beta + \blamb^{\transpose}\bfth}} -cN + \zeta_{\collab}\\
    &=\widetilde{W}_\collab (\blamb, N )\eqsp.
\end{align*}
We denote by $\widetilde{\Xi}_\collab =\defEns{(\blamb,N)\in\Delta_{\abs{\cB}}\times\R_{+}:\:\blamb N \in\Xi_\collab}$ and $\widetilde{\cT}_\collab$ the problem:
\begin{equation}\label{def:tildetb}
(\widetilde{\cT}_{\collab}):\quad\text{minimize}\quad (\blamb, N)\in\Delta_{\abs{\cB}}\times \R_+ \mapsto -\widetilde{W}_{\collab}(\blamb, N)\quad\text{subject to}\quad(\blamb, N)\in\widetilde{\Xi}_{\collab}\eqsp.    
\end{equation}
By definition, if $(\blamb, N)$ is a solution to $\widetilde{\cT}_{\collab}$, then $\bfn=\blamb N$ is a solution to $\cT_{\collab}$. We can further decompose $\widetilde{\cT}_\collab$ by observing that
\begin{equation*}
    -\widetilde{W}_{\collab}(\blamb, N)=f(N) + g(\blamb)\eqsp, 
\end{equation*}with $$f(N)= a\abs{\cB}(\Rstar + 2\alpha_{\delta}(1+N)^{-\gamma}+2\beta)+cN +\zeta_\collab \quad\text{and}\quad g(\blamb)= 2a\abs{\cB}\blamb^{\transpose}\bfth\eqsp.$$
Finally, we denote a slice of $\widetilde{\Xi}_{\collab}$ along $N\geq 0$ as $\widetilde{\Xi}_\collab (N) = \defEnsLigne{\blamb\in\Delta_{\abs{\cB}}:\: (\blamb, N)\in\widetilde{\Xi}_\collab}$ and $\cN=\defEnsLigne{N\geq 0:\: \widetilde{\Xi}_{\collab}(N)\ne \emptyset}$. $\widetilde{\cT}_\collab$ comes down to
\begin{equation}
    \label{def:decomposedtildet}
    \min_{(\blamb, N)\in\widetilde{\Xi}_{\collab}}\defEnsLigne{f(N)+g(\blamb)}=\min_{N\in\cN}\defEnsLigne{f(N)+\min_{\blamb\in\widetilde{\Xi}_{\collab}(N)}g(\blamb)}\eqsp.
\end{equation}A strategy to solve \Cref{def:decomposedtildet} is to (i) address the innermost problem $\min_{\blamb\in\widetilde{\Xi}_{\collab}(N)}g(\blamb)$ with $N\in\cN$ fixed, and denoting $\blamb^{(N)}\in\widetilde{\Xi}_{\collab}(N)$ its solution, (ii) solve the outermost problem: \begin{equation}
    \min_{N\in\cN}\defEnsLigne{f(N) + g(\blamb^{(N)})}\eqsp.
\end{equation}Point (i) is done in the proof of \Cref{prop:optimalfullinfo}. However, the resulting problem in (ii) is hard to tackle because $\blamb^{(N)}$ has no simple form. Therefore, we leave aside the term $g(\blamb^{(N)})$ (which can be easily controlled, as shown in \Cref{lemma:approxwelfare}) to determine $N$ and only consider the problem \begin{equation}\label{eq:relaxedpb}
    \min_{N\geq 0}\,f(N)\eqsp.
\end{equation}Since $f$ is differentiable and strictly convex, its minimizer is uniquely defined by $f'(\Nbar)=0$, which gives\begin{equation}\label{eq:defnbar}
    \Nbar = (\noutside + 1)\abs{\cB}^{\frac{1}{1+\gamma}}-1\eqsp,
\end{equation} where $\noutside$ is defined in \Cref{prop:outsideoption}. As the solution of the relaxed problem \eqref{eq:relaxedpb}, we take $\sum_{j\in\agents}\nstar_j (\bfth)=\Nbar$ is the simplified contribution scheme $(\Bstar, \bfns)$. In many reasonable settings, this approximation is very satisfactory as shown by \Cref{lemma:approxwelfare}.
\section{Proofs}\label{proofs}
\excessrisk*
\begin{proof}
   \begin{enumerate}[wide, labelwidth=!, labelindent=0pt,label=(\roman*)]
    \item 
Let $\defEnsLigne {(X^j _1, Y^i_1),\ldots, (X^j _{n_{j}},Y^j _{n_{j}})}$ be $n_{j}>0$ i.i.d samples from $P_j \in\cP$ and $\agentmodel{j}=\inf_{g\in\hypoclass}\emprisk{g}{j}$. First, observe that
        \begin{equation}
            \label{risk_Pzero_pi}
            \risk{\agentmodel{j}}{0}\leq \risk{\agentmodel{j}}{j}+\sup_{g\in\hypoclass}\abs{\risk{g}{0}-\risk{g}{j}}=\risk{\agentmodel{j}}{j} + \type_j\eqsp.
        \end{equation}

        Let $\upsilon >0$ and $g_j^{\upsilon} \in \hypoclass$ such that $\risk{g^\upsilon _j}{j} \leq \inf_{g\in\hypoclass} \risk{g}{j} + \upsilon$. Then with probability at least $1-\delta$,
        \begin{align}
            \risk{\agentmodel{j}}{j}&=\risk{\agentmodel{j}}{j}-\risk{g_j ^\upsilon}{j} + \risk{g_j ^\upsilon}{j} \notag \\
            &\leq \parenthese{\emprisk{\g^\upsilon _j}{j}-\emprisk{\agentmodel{j}}{j}} + \risk{\agentmodel{j}}{j}-\risk{g_j ^\upsilon}{j} + \risk{g_j ^\upsilon}{j} \notag \\
            &\leq 2\sup_{g\in\hypoclass}\abs{\risk{g}{j}-\emprisk{g}{j}} + \inf_{g\in\hypoclass}\risk{g}{j}+\upsilon \notag \\
            &\leq 2\parenthese{\frac{\alpha_{\delta}}{(n+1)^\gamma}+\beta} + \inf_{g\in\hypoclass}\risk{g}{j} \label{bound_risk_ghat}\eqsp,
        \end{align}
        where the last inequality holds taking the limit $\upsilon\rightarrow 0$ and using \cref{assumption_upper_bound_risk}. Finally, observing that $\inf_{g\in\hypoclass}\risk{g}{j} \,-\, \Rstar=\inf_{g\in\hypoclass}\risk{g}{j}-\inf_{g\in\hypoclass}\risk{g}{0}\leq  \sup_{g\in\hypoclass}\abs{\risk{g}{j}-\risk{g}{0}}= \type_j$, and combining \eqref{risk_Pzero_pi} as well as \eqref{bound_risk_ghat} yields
        \begin{equation}\label{bound_risk_ghat_Pzero}
            \risk{\agentmodel{j}}{0}\leq \Rstar + 2\type_j + 2\parenthese{\frac{\alpha_{\delta}}{(n+1)^\gamma}+\beta}=\Rstar + \eps(\type_j, n)\eqsp,
        \end{equation}
        with probability at least $1-\delta$. 
      \item
        Let $\collab\in\defEns{0,1}^J$ and $\bfn= (n_1,\ldots, n_{j})\in\R_+ ^J$. For any $j \in\collab$, let $\defEnsLigne {(X^j _1, Y^i_1),\ldots, (X^j _{n_{j}},Y^j _{n_{j}})}$ be a collection of $n_{j}>0$ i.i.d samples from $P_j\in\cP$. Denote by $N=\sum_{j\in \agents} B_j n_{j}$ and  consider $\blamb = (\lambda_1,\ldots,\lambda_J)$ such that $\lambda_j = B_j n_{j} / N$. Note that $\blamb$ belongs to  the $J$-dimensional simplex. For any $g\in\hypoclass$, the empirical risk over contributions is 
        \begin{equation}
            \label{coalition_emprisk}
            \emprisk{g}{\collab} =\frac{1}{N}\sum_{j\in\agents}B_j \sum_{i=1}^{n_{j}} \ell(g(X^j_i), Y^j_i)=\sum_{j\in\agents}B_j \lambda_j \emprisk{g}{j}\eqsp ,
        \end{equation}and the population risk is
        \begin{equation}
            \label{exp_coalition_emprisk}
            \cR_{\collab} (g) = \E\parentheseDeux{\emprisk{g}{\collab}}=\sum_{j\in\agents}B_j\lambda_j \E\parentheseDeux{{\emprisk{g}{j}}}=\sum_{j\in\agents}B_j\lambda_j \risk{g}{j}\eqsp.
        \end{equation}
        One the one hand, the collaborative model $\collabmodel\in\hypoclass$ satisfies:
    \begin{align}
    \nonumber \risk{\collabmodel}{0}\leq \cR_{\collab} (\collabmodel) + \sup_{g\in\hypoclass}\abs{\risk{g}{0}-\cR_{\collab} (g)}\notag
    &\leq \cR_{\collab} (\collabmodel) + \sup_{g\in\hypoclass}\defEns{\sum_{j\in\agents}\lambda_j \abs{\risk{g}{j}-\risk{g}{0}}} \notag \\
    \label{risk_vs_Pzero_mixture} & \leq \risk{\collabmodel}{\collab} + \sum_{j\in\agents}\lambda_j \sup_{g\in\hypoclass}\abs{\risk{g}{j}-\risk{g}{0}} \notag \\
    & = \cR_{\collab} (\collabmodel) + \avgtype \eqsp,
\end{align}

with $\avgtype= N^{-1}\sum_{j\in\agents}B_j n_{j} \type_j = \sum_{j\in\agents}B_j\lambda_j \type_j$. Now, let $\upsilon > 0$ and $g^\upsilon\in\hypoclass$ such that $\cR_{\collab} (g^\upsilon) \leq \inf_{g\in\hypoclass}\cR_{\collab} (g) +\upsilon$. We also have
\begin{align}
    \nonumber \cR_{\collab} (\collabmodel) &= \cR_{\collab} (\collabmodel) -\cR_{\collab} (g^\upsilon) + \cR_{\collab} (g^\upsilon) \notag\\
    \nonumber & \leq \parenthese{\emprisk{g^\upsilon}{\collab}-\emprisk{\collabmodel}{\collab}} + \cR_{\collab} (\collabmodel) -\cR_{\collab} (g^\upsilon) + \cR_{\collab} (g^\upsilon) \notag \\
    \nonumber & \leq 2\sup_{g\in\hypoclass}\abs{\cR_{\collab} (g) - \emprisk{g}{\collab}} + \inf_{g\in\hypoclass}\cR_{\collab} (g) + \upsilon \notag \\
    \label{pac_bound_weighted_risk_final_mixture} & \leq 2\parenthese{\frac{\alpha_{\delta}}{(N+1)^{\gamma}}+\beta}+\inf_{g\in\hypoclass} \cR_{\collab} (g) \quad \text{with probability at least }1-\delta \eqsp,
\end{align}
where the first inequality comes from $\emprisk{\collabmodel}{\collab}=\inf_{g\in\hypoclass}\emprisk{g}{\collab}\leq \emprisk{g^\upsilon}{\collab}$, and the last is obtained by taking $\upsilon\rightarrow 0$ and applying assumption \cref{assumption_upper_bound_risk}. Now, observe that $\inf_{g\in\hypoclass}\cR_{\collab} (g) -\Rstar = \inf_{g\in\hypoclass}\cR_{\collab} (g) -\inf_{g\in\hypoclass}\risk{g}{0}\leq\sup_{g\in\hypoclass}\abs{\cR_{\collab} (g) - \risk{g}{0}}\leq\sum_{j}B_j\lambda_j\sup_{g\in\hypoclass}\abs{\risk{g}{j}-\risk{g}{0}}=\avgtype$. Combining this observation with \cref{risk_vs_Pzero_mixture} and \cref{pac_bound_weighted_risk_final_mixture} gives with probability $1-\delta$
   \begin{equation}
       \label{risk_h_Pzero}
       \risk{\collabmodel}{0}\leq \Rstar + 2\parenthese{\frac{\alpha_{\delta}}{(N+1)^{\gamma}}+\beta} + 2\avgtype = \Rstar + \eps(\vartheta(\collab, \bfn), N)
     \eqsp.\end{equation}
   \end{enumerate}
 \end{proof}

\outsideoption*

\begin{proof}
For any $j\in\agents$ and $n\ge 0$, \Cref{lemma:excessrisk} gives that $$-u_j \parenthese{(0,\collab_{-j}), (n,\bfn_{-j}),\type_j} =a[\Rstar + 2(\alpha_{\delta}(n+1) ^{-\gamma}  + \beta + \type_j)] + cn= f(n)\eqsp.$$Since $f$ is strictly convex and differentiable, it admits a unique maximizer $\noutside\geq0$ determined by $f'(\noutside) =0$. Simple algebra leads to $\noutside = (2a\gamma c^{-1}\alpha_{\delta})^{1/\gamma+1}-1$.
\end{proof}

\optimalfullinfo*

\begin{proof}
Recall that the optimal aggregation problem reads
\begin{align}
\label{master_problem} 
    &\text{maximize}\quad W(\collab,\bfn)\in\defEns{0,1}^J \times \R^J _+ \mapsto \sum_{j\in\agents}u_j (\collab, \bfn)\\
    &\text{subject to}\quad
        \min_{j\in\agents}u_j ( \collab, \bfn)- \outside_j \geq 0\eqsp.
    \notag
\end{align}
Define for any $j\in\agents, \collab_{-j}\in\defEns{0,1}^{J-1}$, and $\bfn_{-j}\in\R_{+}^{J-1}$, the maximum number of samples agent $j\in\agents$ having $B_j=1$ may be asked given their participation constraint: $$\maxcontrib
_j(\collab, \bfn)=\max\defEnsLigne{n\geq0:\,u_j \parenthese{(1,\collab_{-j}), (n,\bfn_{-j})} \geq \outside _j}\eqsp.$$Given \Cref{definition_utility} and \Cref{prop:outsideoption}, we obtain 
\begin{equation}
    \label{definition:maxcontrib}
    \maxcontrib_j(\collab,\bfn)=\noutside-ac^{-1}(\eps(\collab, \bfn)-\eps(\type_j, \noutside))\eqsp,
\end{equation}
where $N=\sum_{j\in\agents}B_j n_j = \bfn^{\transpose} \collab$. Problem \eqref{master_problem} rewrites in canonical form
\begin{equation}\label{master_problem_canonical}
\text{minimize}\quad -W(\collab, \bfn)
\quad \text{subject to}\quad \begin{cases}
        \max_{j\in\cB}n_j - \maxcontrib_j (\collab, \bfn) \leq 0 \\
        \max_{j\in\agents}-n_j \leq 0 \eqsp,
    \end{cases}
\end{equation}
where $\cB=\defEns{j\in\agents:\:B_j=1}$. We first show that it admits a unique minimum.

Fix a $\collab\in\defEns{0,1}^{J}$ and consider problem \eqref{master_problem_canonical} with respect to $\bfn\in\R^J _+$ only. We call this subproblem $\cT_{\collab}$. First, we show that it is enough to focus on $(n_j)_{j\in\cB}$ rather than $(n_j)_{j\in\agents}$ in $\cT_\collab$. With $\bfn_{\cB} = (\n_{j})_{j\in\cB}$ and $\bfn_{\cB^c}=(\n_{j})_{j\notin\cB}$, note that $W(\collab,\bfn)$ is decomposable in $\bfn_\cB$ and $\bfn_{\cB^c}$:
$$
W(\collab, \bfn) = W(\collab, \bfn_\cB) + W(\collab, \bfn_{\cB^c})\eqsp,
$$
where, by a slight abuse of notation $W(\collab, \bfn_{\cB})=\sum_{j\in\collab} u_j(\collab,\bfn)$ and $W(\collab, \bfn_{\cB^c})=\sum_{j\in\cB^c}u_j(\collab,\bfn)$. With
\begin{equation}\label{def:xib}
\Xi_\collab = \defEns{\bfn_{\cB}\in\R_{+}^{\cB}:\:\begin{array}{cc}
    \max_{j\in\cB}\defEns{n_{j} - \maxcontrib_j(\collab,\bfn)} \leq 0   \\
    \max_{j\in\cB}-n_{j} \leq 0
\end{array}}\eqsp,    
\end{equation}

problem $\cT_\collab$ is equivalent to
\begin{equation}
    \label{problem:decomposable}
    \min_{\bfn_{\cB}\in\Xi_\collab}-W(\collab, \bfn_{\cB}) + \min_{\bfn_{\cB^c}\in\R_{+}^{J-\abs{\cB}}}-W(\collab, \bfn_{\cB^c}) = \min_{\bfn_{\cB}\in\Xi_\collab}-W(\collab, \bfn_{\cB}) + \sum_{j\in\agents}-(1-B_j)\outside_j\eqsp,
\end{equation}
by \Cref{prop:outsideoption}. Since $\sum_{j\in\agents}-(1-B_j)\outside_j$ is constant, by \Cref{problem:decomposable} it is enough to focus on the existence of $\min_{\bfn_{\cB}\in\Xi_\collab}-W(\collab, \bfn_{\cB})$. On the one hand, $\Xi_\collab$ is bounded. Indeed for any $j\in\cB$, by \Cref{definition:maxcontrib} and \Cref{lemma:excessrisk}

\begin{align}
\maxcontrib_j(\collab, \bfn_{\cB})&=\noutside -\frac{a}{c}\parentheseDeux{\eps(\collab,\bfn_{\cB})-\eps(\type_j,\noutside)} \notag\\
        &= \noutside-\frac{a}{c}\parentheseDeux{\Rstar + 2\parentheseDeux{\frac{\alpha_{\delta}}{(1+\bfn_{\cB}^{\transpose}\collab)^\gamma}+\beta + \avgtype} - \Rstar -2\parentheseDeux{\frac{\alpha_{\delta}}{(1+\noutside)^{\gamma}}-\beta-\type_j}}\notag\\
        &=\noutside -\frac{2a}{c}\parentheseDeux{\alpha_{\delta}\parenthese{(1+\bfn_{\cB}^{\transpose}\collab)^{-\gamma}-(1+\noutside)^{-\gamma}}+\parenthese{\avgtype-\type_j}} \notag\\
        &\leq \noutside - \frac{2a}{c}\parentheseDeux{\alpha_{\delta}\parenthese{1-(1+\noutside)^{-\gamma}}+\max_{j\in\cB}\type_j}\notag\\
        &= M \label{eq:gjbounded}
\end{align}
Thus, $\Xi_\collab \subset [0,M]^{\abs{\cB}}$. Moreover, $\Xi_\collab$ is closed and convex. For any $j\in\cB$, rewrite
    \begin{align*}
        \maxcontrib_j(\collab,\bfn_{\cB})
        &=\noutside -\frac{2a}{c}\parentheseDeux{\alpha_{\delta}\parenthese{(1+\bfn_{\cB}^{\transpose}\collab)^{-\gamma}-(1+\noutside)^{-\gamma}}+\parenthese{\parenthese{\bfn_{\cB}^{\transpose}\collab}^{-1}\bfn_{\cB}^{\transpose}\bfth-\type_j}} \\
        &= g_j (\bfn_{\cB})\eqsp,
    \end{align*}

    and define $h_j:\bfn_{\cB}\mapsto \n_{\cB, j}-g_j(\bfn_{\cB})$. Observe that $\Xi_\collab =\defEnsLigne{\bfn_{\cB}\in\R_{+}^{\abs{\cB}}:\: \forall j \in\cB,\,  n_{\cB, j}\geq 0\:\text{ and }\:h_j (\bfn_{\cB})\leq 0}$, that is $\Xi_\collab = A\,\cap\, B$ with $A=\R_{+}^{\abs{\cB}}$ and $B=h^{-1}_1 ((-\infty,0])\cap\ldots\cap h^{-1}_{\abs{\cB}} ((-\infty,0])$. For any $j\in\cB$, $h^{-1}_j ((-\infty,0])$ is convex, because so is $h_j$. Additionally, $h^{-1}_j ((-\infty,0])$ is closed as the inverse image of a closed set by a continuous function. It follows that $\Xi_\collab$ is convex and closed as the intersection of convex and closed sets. 

As a consequence $\Xi_\collab$ is compact and convex, and $-W(\collab, \centraldot)$ is strictly convex so $\cT_{\collab}$ admits a unique minimizer $\bfn_{\cB}^\opt \in\Xi_\collab$.

Since there are finitely many $\collab\in\defEns{0,1}^{J}$, $\min\defEnsLigne{-W(\collab, \bfn_{\cB}^\opt)+\sum_{j\in\cB^c}\outside_j,\, \collab\in\defEns{0,1}^{J}}>0 $ exists and coincide with the minimum of problem $\eqref{master_problem_canonical}$. We show later in the proof that the optimal $\collab^\opt \in\defEns{0,1}^J$ is unique (see \textit{part 1}), so the minimizer of \Cref{master_problem_canonical} is unique. This establishes the point (i) of the result.

The remainder of the proof proceeds as follows: we first characterize the optimal $\collab^\opt\in\defEns{0,1}^J$, and then prove that the solution $\bfn^\opt _{\cB^\opt} \in \R_{+}^{\abs{\cB}}$ of $\cT_{\collab^\opt}$ has the form presented in point (ii) of the result.

\proofpart{1}{Optimal $\collab\in\defEns{0,1}^{\agents}$}We show that $\collab^\opt=(1,\ldots,1)^{\transpose}$. By contradiction, assume there exists $\collab' \ne (1,\ldots,1)^{\transpose}$ and $\bfn' \in\R^J _+$ such that for any $(\collab, \bfn)\in\defEns{0,1}^J \times \R^J _+$ 
\begin{equation}
    \label{eq:bprimeoptimal}
    -W(\collab', \bfn')\leq -W(\collab, \bfn )\eqsp.
\end{equation}
Denote by $j\in\agents$ an agent such that $B'_j = 0$, and first assume $\type_j \geq \vartheta(\collab', \bfn')$. Consider the alternative allocation $(\collab'' , \bfn'')=((1,\collab'_{-j}), (0, \bfn'_{-j}))$. Observe that $B'_j n'_j = B''_j n''_j = 0$, so
\begin{equation}
N'=\collab'^{\transpose}\bfn'=\collab''^{\transpose}\bfn''=N''\quad\text{and}\quad\vartheta(\collab', \bfn')  
=\vartheta(\collab'', \bfn'')\eqsp.\label{eq:avgtypeisthesame}
\end{equation}
This in particular implies 
\begin{align}
    \eps(\collab', \bfn')&=2\parentheseDeux{\alpha_\delta (1+N')^{-\gamma}+\beta+\vartheta(\collab', N')}\notag\\
    &=2\parentheseDeux{\alpha_\delta (1+N'')^{-\gamma}+\beta+\vartheta(\collab'', N'')}=\eps(\collab'', \bfn'')\eqsp.\label{eq:excessriskarethesame}
\end{align}
Thus by \eqref{eq:excessriskarethesame}
\begin{align}
    -u_j(\collab', \bfn') = \outside_j &= a(\Rstar + \eps(\type_j, \noutside)) + c\noutside \notag >a(\Rstar + \eps(\type_j, \noutside))\notag\\
    &\geq a(\Rstar + \eps(\collab', \bfn')) = a(\Rstar + \eps(\collab'', \bfn''))= -u_j(\collab'',\bfn'')\label{eq:jisbetterofff}\eqsp,
\end{align}where the second inequality results from $\type_j  \geq \vartheta(\collab', \bfn')$ and $N'\geq\noutside$. Finally by definition of $(\collab'', \bfn'')$, \eqref{eq:avgtypeisthesame} and  \eqref{eq:excessriskarethesame}, for any $k\ne j$ such that $B''_k = 1$:
\begin{align}
    -u_k (\collab'', \bfn'')&=a\parentheseDeux{\Rstar + 2\parenthese{\alpha_{\delta}(1+N'')^{-\gamma}+\beta+\vartheta(\collab'', \bfn'')}}+cn_k '' \notag\\
    &=a\parentheseDeux{\Rstar + 2\parenthese{\alpha_{\delta}(1+N')^{-\gamma}+\beta+\vartheta(\collab', \bfn')}}+cn_k ' = -u_k(\collab', \bfn')\eqsp.\label{eq:utilitysameforotherss}
\end{align}
Hence, \eqref{eq:utilitysameforotherss} together with  \eqref{eq:jisbetterofff} give
\begin{align}
    -W(\collab'', \bfn'' )=-\sum_{k\ne j}u_{k}(\collab'',\bfn'')  - u_j(\collab'',\bfn'')&= -\sum_{k\ne j}u_i(\collab', \bfn') -u_j(\collab'', \bfn'')\notag \\
    &<-\sum_{k\ne j}u_i(\collab', \bfn') -u_j(\collab', \bfn')=-W(\collab', \bfn')\eqsp.
\end{align}
This contradicts \eqref{eq:bprimeoptimal}. Now assume $\type_j < \vartheta(\collab', \bfn')$, and let $R\in\agents$ be such that $\type_R = \max\defEnsLigne{\type_k,\, k\in\cB}$. Consider $(\collab'', \bfn'')\in\defEns{0,1}^J \times \R^J _+$ where
\begin{align*}
    \collab'' &= (B'_1, \ldots, B'_{j-1},\,1\,,B'_{j+1},\ldots,B'_J)^{\transpose} \\
    \text{and } \bfn''&=(n'_1 , \ldots, n'_{j-1},\, \maxcontrib_j\,,n'_{j+1},\ldots, n'_{R-1}, \,\max(0,n'_{R} - \maxcontrib_j)\,,n'_{R+1},\ldots,n'_J)^{\transpose}\eqsp.
\end{align*}
which is feasible. Observe on the one hand that
\begin{equation*}
    N'' = N'\quad\text{ and}\quad\vartheta(\collab'', \bfn'') \leq\vartheta(\collab', \bfn') + \frac{\maxcontrib_j}{N'}\parenthese{\type_j - \type_{R}} < \vartheta(\collab', \bfn')\eqsp,
\end{equation*}
because $\type_j < \vartheta(\collab', \bfn')\leq\type_R$. In particular
\begin{align}    
    \riskexcess(\collab'', \bfn'')&=\Rstar +2\parentheseDeux{\alpha_\delta (N'' + 1)^{-\gamma}+\beta + \vartheta(\collab'', \bfn'')} \notag\\
    &<
    \Rstar +2\parentheseDeux{\alpha_\delta (N' + 1)^{-\gamma}+\beta + \vartheta(\collab', \bfn')}=\riskexcess(\collab', \bfn')\eqsp.\label{eq:notjbetteroff}
\end{align}\eqref{eq:notjbetteroff} in turn implies that for any $k\ne j$ such that $B'' _k =1$:
\begin{equation}
    \label{eq:notjbetterofftwice}
    -u_k (\collab'', \bfn'')=a\parentheseDeux{\Rstar + \eps(\collab'', \bfn'')}+c n''_k < a\parentheseDeux{\Rstar + \eps(\collab', \bfn')}+c n'_k =-u' _k (\collab', \bfn')\eqsp.
\end{equation}
because $n'_k = n''_k$ for $k\in\cB\setminus\defEnsLigne{j,R}$ and $n'' _R < n' _R$. On the other hand, we have
\begin{equation}
    \label{eq:ujindifferent}
    u_j(\collab'', \bfn'')=\outside_j=u_j(\collab',\bfn')\eqsp,
\end{equation}
by definition \eqref{definition:maxcontrib} of $\maxcontrib_j (\collab'', \bfn'')$. \eqref{eq:notjbetterofftwice} and \eqref{eq:ujindifferent} together yield
\begin{align*}
    -W(\collab'',\bfn'' ) = -\sum_{k\ne j}u_k(\collab'', \bfn'') - u_j(\collab'', \bfn'')
    &=-\sum_{k\ne j}u_k(\collab'', \bfn'') - u_j(\collab', \bfn') \\ 
    &< -\sum_{k\ne j}u_k(\collab', \bfn') - u_j(\collab', \bfn')
    =-W(\collab', \bfn' )\eqsp,
\end{align*}
which once again violates \eqref{eq:bprimeoptimal}. Thus, we obtain $\collab^\opt=(1,\ldots,1)^{\transpose}$, that is $\cB^\opt=\agents$.

\proofpart{2}{Characterization of $\bfnopt_{\agents}\in\R_{+}^{J}$}We now focus on the problem $\cT_{\collab^{\opt}}$. To lighten notations, we write $\bfn^\opt \in\R_+ ^J$ instead of $\bfn^\opt _{\cB^{\opt}}$, and $\collab$ instead of $\collab^\opt = (1,\ldots,1)$. We first reformulate $\cT_\collab$ in a more convenient way. Recall that $\cT_{\cB}$ reads
\begin{equation}\label{def:pbct}
    (\cT_\collab):\quad\text{minimize}\quad -W(\collab, \bfn)\quad\text{subject to}\quad \bfn \in\Xi_{\collab}\eqsp,
\end{equation}where $\Xi_{\collab}$ is defined in \eqref{def:xib}. As explained in \Cref{appendix:relaxation}, this problem can equivalently be stated as 
\begin{equation}\label{def:tildetb}
(\widetilde{\cT}_{\collab}):\quad\text{minimize}\quad -\widetilde{W}_{\collab}(\blamb, N)\quad\text{subject to}\quad(\blamb, N)\in\widetilde{\Xi}_{\collab}\eqsp,    
\end{equation}
where
\begin{equation*}
    -\widetilde{W}_{\collab}(\blamb, N)=f(N) + g(\blamb)\eqsp, 
\end{equation*}with $$f(N)= aJ(\Rstar + 2\alpha_{\delta}(1+N)^{-\gamma})+cN\quad\text{and}\quad g(\blamb)= 2aJ\blamb^{\transpose}\bfth\eqsp,$$and$$\widetilde{\Xi_\collab} = \defEns{(\blamb, N)\in\Delta_{\abs{\cB}}\times\R_+ :\: \blamb N \in \Xi_{\collab}}\eqsp. $$
Writing $\widetilde{\Xi}_\collab (N) = \defEnsLigne{\blamb\in\Delta_{J}:\: (\blamb, N)\in\widetilde{\Xi}_\collab}$ for $N\geq 0$ and  $\cN=\defEnsLigne{N\geq0 :\:\widetilde{\Xi}_{\collab}(N)\ne\emptyset}$,  $\widetilde{\cT}_\collab$ comes down to
\begin{equation*}
    \min_{(\blamb, N)\in\widetilde{\Xi}_{\collab}}\defEnsLigne{f(N)+g(\blamb)}=\min_{N\in\cN}\defEnsLigne{f(N)+\min_{\blamb\in\widetilde{\Xi}_{\collab}(N)}g(\blamb)}\eqsp.
\end{equation*}
In this proof, we address the innermost problem, which is enough to show that $\bfn^\opt$ satisfies the point (ii) of the result. Let $N\geq 0$ such that $\widetilde{\Xi}_{\collab}(N)\ne \emptyset$, which exists because $\Xi_{\collab}\ne \emptyset$ (for instance, $(\parentheseLigne{1,0,\ldots,0}, (\noutside,0,\ldots,0))\in\Xi_\collab $). By definition of $\widetilde{\Xi}_{\collab}$, it reads\begin{equation}\label{workable_problem}(\widetilde{\cT}^{(N)}):\quad\text{minimize}\quad\blamb\in\Delta_{J}\mapsto 2aJ\blamb^{\transpose}\bfth \quad\text{subject to}\quad\begin{cases}
        \forall j \in\cB:\lambda_j \geq 0 \\
        \forall j\in\cB: \: \lambda_j  N\leq \maxcontrib_j (\collab,\blamb N) \\
        \sum_{j\in\cB} \lambda_j = 1\eqsp,
    \end{cases}
\end{equation}

The following lemma provides a necessary condition for any solution to problem \eqref{workable_problem}.
\begin{lemma}
\label{lemma_charact_lambda}
Let $\blamb^{(N)}=(\lambda^{(N)}_1,\ldots,\lambda^{(N)}_J)\in\Delta_{J}$ be a solution to \eqref{workable_problem}. If there exists $r\in\agents$ such that $\lambda^{(N)}_{r} > 0$, then $\lambda_{k} ^{(N)} =  N^{-1}\maxcontrib_{k} (\collab, N\blamb^{(N)})$ for any $k<r$.
\end{lemma}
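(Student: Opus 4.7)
The plan is to argue by contradiction using a mass-transfer perturbation. Suppose $\blamb^{(N)}$ is optimal and that there exists $r$ with $\lambda_r^{(N)} > 0$, but that some index $k < r$ has a \emph{strictly} slack constraint: $\lambda_k^{(N)} N < \maxcontrib_k(\collab, N\blamb^{(N)})$. I will build an admissible $\blamb'$ with strictly smaller objective value, which will contradict the optimality of $\blamb^{(N)}$.

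First I would record two monotonicity observations that drive everything. From the definition \eqref{definition:maxcontrib} and the explicit form of $\eps(\collab,\bfn)$, for $N$ fixed the function $\blamb \mapsto \maxcontrib_j(\collab, N\blamb)$ depends on $\blamb$ only through the weighted average type $\vartheta = \blamb^\transpose\bfth$, and is \emph{strictly decreasing} in $\vartheta$. Similarly, the objective $2aJ\,\blamb^\transpose\bfth$ is just $2aJ\,\vartheta$. Thus, decreasing $\vartheta$ simultaneously lowers the objective and relaxes every feasibility constraint.

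Now define the perturbation $\blamb'$ by
\[
\lambda_k' = \lambda_k^{(N)} + \varepsilon, \quad \lambda_r' = \lambda_r^{(N)} - \varepsilon, \quad \lambda_j' = \lambda_j^{(N)} \text{ for } j \notin \{k,r\},
\]
for some small $\varepsilon \in (0, \lambda_r^{(N)}]$. Since $\type_k < \type_r$ by \Cref{assumption_types_sorted}, the new average type is $\vartheta' = \vartheta + \varepsilon(\type_k - \type_r) < \vartheta$, so the objective strictly decreases. To check feasibility: the simplex constraint $\sum_j \lambda_j' = 1$ and $\lambda_j' \geq 0$ hold by construction of $\varepsilon \leq \lambda_r^{(N)}$. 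For every $j \notin \{k,r\}$, $\lambda_j' = \lambda_j^{(N)}$ is unchanged while $\maxcontrib_j(\collab, N\blamb')$ increases (since $\vartheta' < \vartheta$), so the constraint still holds. For $j = r$, the LHS decreases and the RHS increases, so the constraint holds. For $j = k$, we have $\lambda_k' N = \lambda_k^{(N)} N + \varepsilon N$, while $\maxcontrib_k(\collab, N\blamb') \geq \maxcontrib_k(\collab, N\blamb^{(N)}) > \lambda_k^{(N)} N$ by the assumed strict slack; hence for $\varepsilon$ sufficiently small, $\lambda_k' N \leq \maxcontrib_k(\collab, N\blamb')$ still holds.

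Thus $\blamb'$ is feasible with strictly smaller value, contradicting the optimality of $\blamb^{(N)}$. The conclusion is that no $k<r$ can have strict slack, i.e., $\lambda_k^{(N)} N = \maxcontrib_k(\collab, N\blamb^{(N)})$ for all $k<r$. The only delicate point will be to ensure that the slack can be preserved under an $\varepsilon$-perturbation, but since $\maxcontrib_k$ depends continuously on $\blamb$ (through $\vartheta$) and in fact \emph{increases} under the perturbation, a straightforward continuity argument for small $\varepsilon > 0$ suffices.
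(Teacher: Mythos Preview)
Your proof is correct and takes a genuinely different route from the paper's. The paper writes the Lagrangian for problem \eqref{workable_problem}, records the KKT conditions, and uses complementary slackness: from $\lambda_r^{(N)} > 0$ it obtains $\mu_r = 0$, then stationarity at indices $k$ and $r$ together with $\type_r > \type_k$ forces the multiplier $\rho_k$ on the upper constraint at $k$ to be strictly positive, whence that constraint binds. You instead shift $\varepsilon$ mass from $r$ to $k$ directly, exploiting that $\maxcontrib_j(\collab,N\blamb)$ depends on $\blamb$ only through $\vartheta=\blamb^\transpose\bfth$ and is strictly decreasing in it; since the same move lowers the objective and (weakly) relaxes every other constraint, a strictly slack constraint at any $k<r$ yields an immediate improvement. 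Your approach is more elementary---no Lagrangian, no constraint-qualification issue---and makes the single driving monotonicity completely explicit; the paper's KKT route is more mechanical and would adapt more readily if the dependence of $\maxcontrib_j$ on $\blamb$ were less simple than affine in $\vartheta$.
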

\begin{proof}We denote by $\bar{\bfn}(\collab, \bfn)=(\maxcontrib_{1}(\collab,\bfn),\ldots,\maxcontrib_{J}(\collab, \bfn))^{\transpose}$ for any $\bfn\in\R_+ ^{J}$.
The Lagrangian associated to problem \eqref{workable_problem}, with $\boldsymbol{\mu}, \boldsymbol{\rho}$ and $\nu$ the associated dual variables, reads:
\begin{align*}
    \cL (\blamb, \bfmu, \bfrho, \nu) &= 2aJ\blamb^{\transpose}\bfth -\bfmu^{\transpose}\blamb +\nu\parenthese{1-\blamb^{\transpose}\bOne} +\bfrho^{\transpose}\parentheseDeux{ N\blamb - \Bar{\bfn}(\collab, N\blamb)} \\
    &=2aJ\blamb^{\transpose}\bfth +c N -\bfmu^{\transpose}\blamb +\nu\parenthese{1-\blamb^{\transpose}\bOne}\\
    &+\bfrho^{\transpose}( N\blamb - \noutside\bOne +\frac{2a}{c}\parentheseDeux{\alpha_{\delta}\parenthese{ (N +1)^{-\gamma}-(\noutside+1)^{-\gamma}}\bOne+\parenthese{(\blamb^{\transpose}\bfth) \bOne-\bfth}}) \eqsp.
\end{align*}

Since the objective and the constraints are convex, $\cL$ admits a saddle point $(\blamb^{(N)}, \bfmu, \bfrho, \nu)\in\Delta\times\R^{J} \times \R^{J} \times \R$ which is solution to problem \eqref{workable_problem} and verifies the following KKT conditions:
\begin{numcases}{}
    \forall k \leq J:\: 2aJ\type_{k} - \mu_{k} + \rho_{k} ( N +2ac^{-1}\type_{k}) = \nu \label{1}  \\
    \forall k \leq J:\:\mu_{k} \lambda^{(N)}_{k} = 0,\quad\mu_{k}\geq 0 \label{2}\\
    \forall k\leq J:\:\rho_{k}\parenthese{\lambda^{(N)}_{k} \, N-\maxcontrib_{_k}(\collab, \blamb^{(N)}N)}=0,\quad \rho_{k} \geq 0 \label{4} \\
    \sum_{k\in\cB}\lambda^{(N)}_{k} = 1\eqsp.
\end{numcases}
 Assume there exists $r\in\cB$ such that $\lambda^{(N)}_{r} > 0$. By complementary slackness \eqref{2}, $\mu_{r} = 0$, and \eqref{1} gives for any $k<r$:
\begin{equation}
    \label{optimality_gamma}
    2aJ\type_{k} - \mu_{k} + \rho_{k} ( N +2ac^{-1}\type_{k}) = 2aJ\type_{r} + \rho_{r} ( N +2ac^{-1}\type_{r})\eqsp,
\end{equation}

That is 
    $$
   \rho_{k} = \frac{2aJ(\type_{r} - \type_{k})+\mu_{k}}{ N +2ac^{-1}\type_{k}} + \rho_{r}\frac{N +2ac^{-1}\type_{r}}{N +2ac^{-1}\type_{k}}\eqsp.
   $$
   By assumption $\type_{r} > \type_{k}$, and since $\mu_{k}\geq 0$ as well as $\rho_{r}\geq 0$, we have $\rho_{k}>0$. It follows from \eqref{4} for agent $k$ that $\lambda^\opt _{k}  N = \maxcontrib(\collab,N\blamb^{(N)} )$. 

\end{proof}Now, define
\begin{equation}
    \label{definition_w}M_{j} = \sum_{k\leq j}\maxcontrib_{k}(\collab, \blamb^{(N)}N)\quad\text{and}\quad
    L=\min\set{j\in\cB}{M_{j} \geq  N} \eqsp.
\end{equation} We show that 
\begin{equation}
    \label{eq:valuenotk}
\lambda^{(N)}_{k} = \maxcontrib_{k}(\collab , N\blamb^{(N)}) N^{-1}\text{ for any }k < L\quad\text{ and }\quad\lambda^{(N)}_{k} = 0\text{ for any }k > L\eqsp.
\end{equation}
To prove the first point, assume by contradiction that there exists $\ell < L$ such that $\lambda^{(N)}_{\ell} < \maxcontrib_{\ell}(\collab , \blamb^{(N)}N)N^{-1}$. By the contrapositive of \lemref{lemma_charact_lambda}, $\lambda^\opt_{\ell + 1}=\ldots=\lambda^\opt _{J}= 0$. Thus

$$
\sum_{k\leq J} \lambda^{(N)}_{k} = \sum_{k\leq\ell}\lambda^{(N)} _{k} \leq \sum_{k\leq \ell} \maxcontrib_{k}(\collab, \blamb^{(N)} N) N^{-1} =M _{\ell}  N^{-1}< 1\eqsp,
$$
by definition of $L$ \Cref{definition_w} and $\ell < L$. This violates the third constraint of problem \eqref{workable_problem}. For the second point, assume there exists $\ell> L$ such that $\lambda^{(N)}_{\ell} > 0$. By \Cref{lemma_charact_lambda}, $\lambda^{(N)}_{k} = \maxcontrib_{k}(\collab, \blamb^{(N)}N) N^{-1}$ for any $k\leq L < \ell$, so
\begin{align*}
    \sum_{k\leq J}\lambda^{(N)}_{k} \geq \sum_{k\leq L}\lambda^{(N)}_{k} + \lambda^{(N)}_{\ell} &= \sum_{k\leq L}\maxcontrib_k(\collab, \blamb^{(N)}N) N^{-1} +\lambda^{(N)}_{\ell} \\
    &= M _{L} N^{-1}+ \lambda^{(N)}_{\ell}\\
    &\geq 1 + \lambda^{(N)}_{\ell} > 1 \eqsp,
\end{align*}
which once again violates the constraints of the problem. Finally by \eqref{eq:valuenotk},
\begin{equation}\label{eq:valuek}
    \lambda^{(N)}_{i_L} = 1 -\sum_{k<L}\lambda^{(N)}_{k} -\sum_{k>L}\lambda^{(N)}_{k} = 1 - M _{L-1} N^{-1}\eqsp,
\end{equation}
by \eqref{eq:valuenotk}. By \eqref{eq:valuenotk} and \eqref{eq:valuek} we obtain for any $j\in\agents$:
\begin{equation}\label{eq:noptb}
 \lambda_j^{(N)} = N^{-1}\maxcontrib_{j}(\collab,N\blamb^{(N)})\2{j<L} + ( 1-N^{-1}M_{j-1})\2{j=L}\eqsp,
\end{equation}
Now, consider the solution $\bfn^\opt (\bfth)\in\R^J$ to the initial problem \eqref{workable_problem}, and define $$\begin{cases}
     N^\opt &= \sum_{j\in\agents}n_j^\opt (\bfth)\\
     \blamb^\opt (\bfth)&=\frac{1}{N^{\opt}}\bfn^\opt (\bfth)  \\
     L^\opt &= \min\defEnsLigne{j\in\agents:\: \sum_{k\leq j}\maxcontrib_k (\collab, \bfn^\opt (\bfth))\geq N^\opt}\eqsp.
\end{cases}$$

$\blamb^\opt (\bfth)$ satisfies \eqref{eq:noptb}, and multiplying by $N^\opt$ yields for any $j\in\agents$:
\begin{equation}
    n_j ^\opt (\bfth) = \maxcontrib _j (\collab, \bfn^\opt (\bfth))\2{j< L^\opt} + (N^\opt - \sum_{k\leq j}n_j ^\opt (\bfth))\2{j=L^\opt}\eqsp,
\end{equation}which establishes point (ii), and concludes the proof.

\end{proof}

\simplifyingfullinfo*
\begin{proof}
    By definition of the simplified scheme \eqref{def:simplifiedscheme}, $\nstar_j (\bfth) = \2{j\leq \Lstar}\maxcontrib_{j}(\bOne, \bfns(\bfth))=\2{j\leq \Lstar}\parentheseDeuxLigne{\noutside -  (2a/c)(\eps(\bOne, \bfns(\bfth)) - \eps(\type_j, \noutside))}$ for any $j\leq \Lstar$. Since $\sum_{j\in\agents}\nstar_j (\bfth) = \Nbar$, we get
    \begin{align*}
        \label{eq:summaxcontrib}
 \Nbar &= \Lstar\noutside - \frac{a}{c}\parentheseDeux{\Lstar\eps(\bOne, \bfns(\bfth)) - \sum_{j\leq \Lstar}\eps(\type_j, \noutside)} \notag \\
        &= \Lstar\parentheseDeux{\noutside - \frac{2a}{c}\parentheseDeux{\alpha_{\delta}\parenthese{(1+\Nbar)^{-\gamma}-(1+\noutside)^{-\gamma}}+(\vartheta(\bOne, \bfns(\bfth))-\empmean{\Lstar})}} \notag \eqsp,
    \end{align*}
    By \Cref{lemma:excessrisk} and denoting $\empmean{\Lstar}= L^{\star-1}\sum_{j\leq \Lstar}\type_j$. This yields $$\vartheta(\bOne, \bfns(\bfth)) = \empmean{\Lstar} - \alpha_{\delta}\parentheseLigne{(1+\Nbar)^{-\gamma}-(1+\noutside)^{-\gamma}}+(c/2a)(\noutside - \Nbar/\Lstar)\eqsp.$$
    Plugging back this value in $\maxcontrib^{\star}_j (\bOne, \bfns(\bfth))=\noutside -(2a/c)\parentheseDeuxLigne{\alpha_{\delta}\parenthese{(1+\Nbar)^{-\gamma}-(1+\noutside)^{-\gamma}}+\parenthese{\vartheta(\bOne, \bfns(\bfth))-\type_j}}$ gives for any $j\in\agents$:
    \begin{equation}\label{eq:nstarpeculiar}
            \nstar_j (\bfth) = \2{j\leq \Lstar}\parentheseDeux{\frac{\Nbar}{\Lstar}-\frac{2a}{c}(\empmean{\Lstar}-\type_j)}\eqsp.
    \end{equation}
    We now determine the order of magnitude of $\Lstar\in\iint{1}{J}$. Recall on the one hand that by \eqref{eq:gjbounded}, there exists $M>0$ such that $0 \leq \nstar_j (\bfth) \leq M$ for any $j\in\agents$, and on the other hand that $\empmean{\Lstar}-\type_j \leq \typemax-\typemin=\text{diam}(\types)<\infty$. Therefore by \eqref{eq:nstarpeculiar},
    \begin{equation}\label{eq:NstarLbounded}
            \frac{2a}{c}\text{diam}(\types) \leq \frac{\Nbar}{\Lstar} \leq M + \frac{2a}{c}\text{diam}(\types) \quad\text{so}\quad \frac{\Nbar}{\Lstar}=\Theta (1)\eqsp.
    \end{equation}

    Since $\Nbar = (\noutside+1)\abs{\cB^\star}^{\frac{1}{1+\gamma}}-1$ and $\cB^\star=\agents$,  \eqref{eq:NstarLbounded} results in $L=\Theta(J^{\frac{1}{1+\gamma}})$.

\end{proof}

\simplifiednegligiblewelfareloss*

\begin{proof}Recall that for any admissible $\bfn\in\Xi_{\collab}$, we can define $N=\bOne^{\transpose}\bfn$ and $\blamb=N^{-1}\bfn$ so that the social cost rewrites
$$
-W(\collab, \bfn) = -\widetilde{W}_{\collab}(\blamb, N)=f(N)+g(\blamb)\eqsp,
$$with $$f(N)= aJ(\Rstar + 2\alpha_{\delta}(1+N)^{-\gamma})+cN\quad\text{and}\quad g(\blamb)= 2aJ\blamb^{\transpose}\bfth\eqsp.$$By definition, $\sum_{j\in\agents}\nstar_j (\bfth)=\Nbar$ where $\Nbar = (\noutside + 1)J^{\frac{1}{1+\gamma}}-1=\argmin_{N\geq 0}f(N)$. Hence,

\begin{align}
W(B^\opt , \bfn^\opt (\bfth))-W(\Bstar, \bfns(\bfth))&=\widetilde{W}_{\collab}(\blamb^\opt , N^\opt)-\widetilde{W}_{\collab} (\blamb^\star,\Nbar)\notag \\
&=f(\Nbar)+g(\blamb^\star)-f(N^\opt)-g(\blamb^\opt) \notag\\
&\leq g(\blamb^\star)-g(\blamb^\opt)=2aJ(\blamb^\star-\blamb^\opt)^{\transpose}\bfth \eqsp.\notag\\
\intertext{Since $\type_j \leq \type_{\Lstar}$ for any $j\in\agents$ such that $\lambda^\star _j > 0$, and $\type_j \geq \type_1$ for any $j\in\agents$ such that $\lambda^\opt _j > 0$:}
W(B^\opt , \bfn^\opt (\bfth))-W(\Bstar, \bfns(\bfth))&\leq 2aJ(\type_{\Lstar}-\type_1)\eqsp,\notag\\
\intertext{By assumption, for any $j\in\defEnsLigne{2,\ldots,J}$ $\type_j - \type_{j-1}=\bigO(1/J)$ so there exists $k_j > 0$ and $R_j \geq 0$ such that  $\type_j - \type_{j-1} \leq k_j J^{-1}$ for $J\geq R_j$. Denoting $R=\max_{j\in\agents}R_j$ and $k=\max_{j\in\agents}k_j$, for any $J\geq R$ we have $\type_{\Lstar}-\type_1 \leq k\Lstar$, so:}
W(B^\opt , \bfn^\opt (\bfth))-W(\Bstar, \bfns(\bfth))&\leq 2aJ\frac{k\Lstar}{J}=2ak\Lstar\eqsp,\notag
\end{align} so $W(B^\opt , \bfn^\opt (\bfth))-W(\Bstar, \bfns(\bfth))=\bigO(\Lstar)=\bigO(J^{\frac{1}{1+\gamma}})$ by   \Cref{cor:simplifyingfullinfo}.

\end{proof}

\unravelling*
\begin{proof} First, we show that the situation where the coalition is empty is a Nash equilibrium. Consider $\bfs=\parentheseLigne{(0,\dagger),\ldots,(0,\dagger)}\in\strategyspace^N$. For any $j\in\agents$ and deviation $(1,\ttype_j)$, $\cB=\defEns{j}$ so $\collabmodel=\agentmodel{j}$ and $v_j((0,\dagger), \bsstar_{-j})=\outside _j \geq v_j((1,\ttype_j), \bfs_{-j})$ by \propref{prop:outsideoption}. Thus, $\sstar\in\nasheqspace$.

Now, consider a pure-strategy Nash equilibrium $\sstar\in\nasheqspace$ such that $\cB=\set{j\in\agents}{B_j = 1}\ne \emptyset$. Denote by $\cC=\defEnsLigne{j\in\cB:\: \nstar_j (\tbfth)>0}$ the set of contributors. We start with two technical lemmas:

\begin{lemma}
    \label{lemma:deltatypebid}There exists $\Delta\in\R$ such that for any $(j,k)\in\cC^2$
    $$
\type_{j}-\ttype_{j}=\type_{k}-\ttype_{k}=\Delta\eqsp.
    $$
\end{lemma}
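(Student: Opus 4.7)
The plan is to express each contributor's payoff as an affine function of their declared type, and derive the claim from Nash optimality by exhibiting a well-chosen parametric unilateral deviation.

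First, I would apply \Cref{cor:simplifyingfullinfo} to the declared types. For every $j\in\cC$, setting $L=|\cC|$ and $\bar{\ttype}_\cC = L^{-1}\sum_{\ell\in\cC}\ttype_\ell$,
\begin{equation*}
\nstar_j(\tbfth)=\frac{\Nbar}{L}+\frac{2a}{c}\bigl(\ttype_j-\bar{\ttype}_\cC\bigr).
\end{equation*}
Plugging this together with $\riskexcess(\collab,\bfns(\tbfth))=2\bigl[\alpha_{\delta}(1+\Nbar)^{-\gamma}+\beta+\vartheta\bigr]$ into \eqref{definition_utility}, computing $\vartheta=\Nbar^{-1}\sum_{\ell\in\cC}\nstar_\ell(\tbfth)\type_\ell$, and exploiting the centering identity $\sum_{\ell\in\cC}(\ttype_\ell-\bar{\ttype}_\cC)\type_\ell=\sum_{\ell\in\cC}\ttype_\ell(\type_\ell-\bar{\type}_\cC)$ with $\bar{\type}_\cC=L^{-1}\sum_{\ell\in\cC}\type_\ell$, the payoff rewrites
\begin{equation*}
v_j(\ttype_j,\bs_{-j})=K-\sum_{\ell\in\cC}\ttype_\ell A_\ell-2a\ttype_j,
\qquad A_\ell=\frac{4a^2}{c\Nbar}(\type_\ell-\bar{\type}_\cC)-\frac{2a}{L},
\end{equation*}
where $K$ collects terms that do not depend on $\ttype_j$. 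In particular $v_j$ is affine in $\ttype_j$ with slope $-(A_j+2a)$, a quantity that depends on the true type $\type_j$ but not on $j$'s declaration.

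Setting $\Delta_\ell:=\type_\ell-\ttype_\ell$ for $\ell\in\cC$, I would then consider for any pair $(j,k)\in\cC^2$ the unilateral deviation $\ttype_j\mapsto\ttype_j+\varepsilon(\Delta_j-\Delta_k)$, $\varepsilon\in\R$. Because the map $\ttype_j\mapsto\nstar_j(\tbfth)$ is continuous and the sort order defining $\cC$ is locally stable when $\tbfth$ has distinct coordinates, the deviation is admissible in both signs of $\varepsilon$ whenever $\ttype_j$ lies in the interior of $\types$. Using the affine form above, the payoff change is $-\varepsilon(A_j+2a)(\Delta_j-\Delta_k)$, and Nash optimality forces it to be nonpositive for both $\varepsilon>0$ and $\varepsilon<0$; hence $(A_j+2a)(\Delta_j-\Delta_k)=0$. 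By symmetry, $(A_k+2a)(\Delta_j-\Delta_k)=0$.

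Now suppose, for contradiction, that $\Delta_j\ne\Delta_k$. Then $A_j+2a=A_k+2a=0$, and subtracting yields $(4a^2/c\Nbar)(\type_j-\type_k)=0$, contradicting the strict ordering in \Cref{assumption_types_sorted}. Therefore $\Delta_j=\Delta_k$ for every pair $(j,k)\in\cC^2$, and the claim holds with $\Delta:=\type_{j_0}-\ttype_{j_0}$ for any fixed $j_0\in\cC$ (and vacuously when $|\cC|\le 1$). The main technical point is the admissibility of the two-sided perturbation at the boundary of $\types$: when $\ttype_j\in\{\typemin,\typemax\}$ one gets only a one-sided inequality on $(A_j+2a)(\Delta_j-\Delta_k)$, and I would handle these cases by pairing the one-sided inequality for $j$ with the corresponding one for $k$ and checking that $\Delta_j\ne\Delta_k$ still forces both $A_j+2a=0$ and $A_k+2a=0$, leading to the same contradiction.
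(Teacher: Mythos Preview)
Your approach is genuinely different from the paper's, and it has a substantive gap. The paper does not rely on perturbing the declared type at all. Instead, it compares to the \emph{outside option} (the deviation $s_j'=(0,\dagger)$), obtaining the clean identity
\[
v_j\bigl((1,\ttype_j),\bs_{-j}\bigr)=\outside_j+2a\bigl[\Delta_j-\Delta\bigr],
\qquad \Delta:=\vartheta(\tbfth)-\tvt(\tbfth)=\sum_{\ell\in\cC}\lambda^\star_\ell(\tbfth)\,\Delta_\ell.
\]
Nash equilibrium then gives $\Delta_j\geq\Delta$ for every $j\in\cC$, and since $\Delta$ is a convex combination of the $\Delta_\ell$'s with strictly positive weights, equality is immediate.

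Your argument uses only perturbations of $\ttype_j$ within $\types$ and never invokes the outside option, and this is exactly where it fails. The payoff is indeed affine in $\ttype_j$ with slope $-(A_j+2a)$, but the best-response condition on $\ttype_j$ merely pushes $\ttype_j$ to the boundary of $\types$ in the direction of the slope; it does not tie the $\Delta_\ell$'s together. Concretely, when $A_\ell+2a>0$ for several $\ell\in\cC$ (and nothing in your argument precludes this: note $\sum_{\ell\in\cC}(A_\ell+2a)=2a(L-1)>0$ whenever $L\geq2$), all such agents declare $\ttype_\ell=\typemin$, hence $\Delta_\ell=\type_\ell-\typemin$ are pairwise distinct by \Cref{assumption_types_sorted}. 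Your ``pairing'' at the boundary cannot rescue this: both one-sided inequalities read $A_j+2a\geq0$ and $A_k+2a\geq0$, and there is no way to force either to vanish. What actually excludes this configuration is that the best-type contributor would have $\Delta_j<\Delta$ and therefore strictly prefer $(0,\dagger)$ --- precisely the comparison your argument omits.

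There are also secondary issues you would need to resolve (the $\ttype_\ell$'s are not known to be distinct, so local stability of $\cC$ and $\Lstar$ under perturbation is not guaranteed), but the missing outside-option deviation is the core gap.
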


\begin{proof}
      To lighten notation, we write in this proof $$\vartheta(\tbfth)=N^{-1}\sum_{j\in\cC}\nstar_j (\tbfth)\,\type_j\quad\text{and}\quad\tvt(\tbfth)=N^{-1}\sum_{j\in\cC}\nstar_j (\tbfth)\,\ttype_j,\quad\text{where}\quad N=\sum_{k\in\cC}\nstar_k (\tbfth)\eqsp.$$By \Cref{def:simplifiedscheme} and \Cref{assumption:simplifiedscheme}, for any $j\in\cC$, $\nstar_{j}(\tbfth) = \maxcontrib_{j}(\collab, \bfns(\tbfth))=\noutside - \frac{a}{c}\parentheseLigne{\eps(\tvt(\tbfth), N) - \eps(\ttype_{j},\noutside)}$, so their payoff reads
\begin{align}  
    v_{j}((1,\ttype_{j}), \bfs_{-{j}})
    &= -a(\Rstar + \eps(\vartheta(\tbfth), N) - c\parentheseDeux{\noutside - \frac{a}{c}\parenthese{\eps(\tvt(\tbfth), N) - \eps(\ttype_{j},\noutside)}} \notag\\
    &= -a(\Rstar + \eps(\type_{j}, \noutside)) -c\noutside \notag\\
    &+ a\parentheseDeux{\parenthese{\eps(\type_j , \noutside)-\eps(\ttype_j , \noutside)} - \parenthese{\eps(\vartheta(\tbfth), N)-\eps(\tvt(\tbfth), N)}}\notag\\
    &= \outside _{j} + 2a\parentheseDeux{\parenthese{\type_{j} - \ttype_{j}}-\parenthese{\vartheta(\tbfth) - \tvt(\tbfth)}}\label{eq:typesminusvartheta}\eqsp.
\end{align}Since $\bfs$ is a Nash equilibrium and $\outside_{j}=v_{j}((0,\dagger),\bfs_{-j})$, we have in particular that
$$
2a\parentheseDeux{\parenthese{\type_{j} - \ttype_{j}}-\parenthese{\vartheta(\tbfth) - \tvt(\tbfth)}}\geq 0\eqsp,$$that is $ \type_{j}-\ttype_{j}=\Delta_{j}\geq \Delta=\vartheta(\tbfth) - \tvt(\tbfth)$ for any $j\in\cC$. We now show that this holds with strict equality for any $j\in\cC$. By contradiction, assume there exists $r\in\cC$ such that $\Delta_{r}=\Delta + \chi$ with $\chi >0$. Then\begin{align*}
    \Delta = \vartheta(\tbfth) - \tvt(\tbfth) = \sum_{j\in\cC}\lambda^\star _{j} (\tbfth) \Delta_{j}&=\sum_{j\in\cC\setminus\defEnsLigne{r}}\lambda^\star _{j} (\tbfth)\Delta_{j} + \lambda^\star _{r} (\tbfth)(\Delta + \chi) \\
    &\geq \Delta + \lambda^\star _{r} (\tbfth)\chi > \Delta\eqsp,
\end{align*}because $\lambda^\star _{r} (\tbfth)>0$ as $r\in\cC$. This is a contradiction, and establishes the result. \end{proof}

\begin{lemma}
    \label{lemma:typeminstrictlydominant}
    For any $j\in \cC$, $\ttype_{j}=\typemin\eqsp.$
\end{lemma}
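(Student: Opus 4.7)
The plan is to argue by contradiction: suppose some $r \in \cC$ declares $\ttype_r > \typemin$, and show that the unilateral deviation $(1,\typemin)$ strictly raises agent $r$'s payoff, violating the Nash property of $\bfs$.

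First I would observe that after the deviation agent $r$ is still a contributor. Indeed, replacing $\ttype_r$ by $\typemin$ makes $r$'s declared type the (tied) smallest in the new profile $\tbfth'$, so $r$ stays in the top-$\Lstar$ positions selected by the simplified scheme. This allows me to reuse verbatim the expansion derived in the proof of \Cref{lemma:deltatypebid}---which only needs $r$ to be a contributor under the evaluated profile---to obtain
\[
v_r\bigl((1,\typemin),\bfs_{-r}\bigr) \;=\; \outside_r + 2a\bigl[(\type_r-\typemin) - (\vartheta(\tbfth') - \tvt(\tbfth'))\bigr].
\]
Since \Cref{lemma:deltatypebid} already yields $v_r((1,\ttype_r),\bfs_{-r}) = \outside_r$ at the Nash equilibrium, the task reduces to the strict inequality
\[
\type_r - \typemin \;>\; \vartheta(\tbfth') - \tvt(\tbfth').
\]

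To establish this, I would write the right-hand side as a convex combination indexed by the new contributor set $\cC''$: $\vartheta(\tbfth') - \tvt(\tbfth') = \Nbar^{-1}\sum_{k \in \cC''}\nstar_k(\tbfth')(\type_k - \ttype'_k)$. The term $k = r$ equals exactly $\type_r - \typemin$, whereas each $k \in \cC \cap \cC'' \setminus\{r\}$ contributes $\type_k - \ttype_k = \Delta = \type_r - \ttype_r < \type_r - \typemin$ by \Cref{lemma:deltatypebid}. Hence the convex combination sits strictly below $\type_r - \typemin$ as soon as some weight is placed on $k \ne r$, which occurs whenever $|\cC| \ge 2$.

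The main obstacle is to control $\cC''$ so that no new contributor sneaks in with an uncontrolled $(\type_k - \ttype'_k)$ and spoils the bound. I expect this to follow from a monotonicity argument on the simplified scheme: lowering a single $\ttype_r$ only improves $r$'s rank in the declared-type order and, via the explicit formula of \Cref{cor:simplifyingfullinfo}, cannot promote any other agent into the top-$\Lstar$ block, yielding $\cC'' \subseteq \cC$. The degenerate case $|\cC| = 1$, where the convex combination has all its mass on $k = r$ and the inequality fails to be strict, is benign: the outer proof of \Cref{prop:unravelling} is organized precisely to derive $|\cC| \le 1$ by combining the present lemma with \Cref{lemma:deltatypebid} and \Cref{assumption_types_sorted}, so the substantive content concerns $|\cC| \ge 2$.
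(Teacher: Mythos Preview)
Your route differs from the paper's. You attempt a finite deviation all the way to $\typemin$ and compare payoffs directly; the paper instead uses a local, derivative argument. It writes
\[
v_j\bigl((1,\ttype_j),\bfs_{-j}\bigr)=\outside_j+2a\sum_{k\in\cC\setminus\{j\}}\lambda^\star_k(\tbfth)\bigl[(\type_j-\ttype_j)-(\type_k-\ttype_k)\bigr],
\]
observes that for an infinitesimal perturbation of $\ttype_j$ the declared-type ordering---hence both $\Lstar$ and $\cC$---is preserved (so the explicit formula of \Cref{cor:simplifyingfullinfo} is differentiable), and computes $\partial v_j/\partial\ttype_j$. At the equilibrium the bracketed differences all vanish by \Cref{lemma:deltatypebid}, leaving $\partial v_j/\partial\ttype_j=-2a\sum_{k\in\cC\setminus\{j\}}\lambda^\star_k(\tbfth)<0$. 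A small downward deviation is then strictly profitable, contradicting the Nash property. The crucial feature is that the argument never leaves a neighbourhood in which $\cC$ is fixed, so the contributor-set tracking problem you flag simply does not arise.

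Your global jump, by contrast, must control $\cC''$ across a large change in $\ttype_r$, and the sketched monotonicity does not settle this---in fact the inclusion can go the wrong way. Lowering $\ttype_r$ decreases $\eps(\ttype_r,\noutside)$ and hence $r$'s own $\maxcontrib_r$ (the countervailing effect through the coalition term enters only with weight $\lambda^\star_r<1$); since $\Lstar$ is the first index at which the cumulated $\maxcontrib$ reaches $\Nbar$, a smaller $\maxcontrib_r$ can push $\Lstar$ \emph{upward} and pull new agents $k\in\cB\setminus\cC$ into $\cC''$. For such $k$, \Cref{lemma:deltatypebid} gives no control on $\type_k-\ttype_k$: a non-contributor's equilibrium declaration is essentially unconstrained, because any $\ttype_k$ that keeps $k$ outside $\cC$ yields the same payoff. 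Hence your convex combination $\vartheta(\tbfth')-\tvt(\tbfth')$ could exceed $\type_r-\typemin$, and the strict inequality you need may fail. This is a genuine gap; if you repair it by deviating only to $\ttype_r-h$ for small $h$, you have essentially reproduced the paper's local argument. (Your treatment of the $|\cC|=1$ boundary case is fine, and the paper's derivative is likewise only strictly negative when $|\cC|\ge2$.)
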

\begin{proof}

Let $j\in\cC$, by \eqref{eq:typesminusvartheta}:
 \begin{align*}
     v_{j}((1,\ttype_{j}), \bfs_{-j})  &=\outside _{j} + 2a\parentheseDeux{\parenthese{\type_{j} - \ttype_{j}}-\parenthese{\vartheta(\tbfth) - \tvt(\tbfth)}} \\
     &=\outside_{j} + 2a\parentheseDeux{\parenthese{1-\lambda_{j} ^{\star}(\tbfth)}\parenthese{\type_{j} - \ttype_{j}} - \sum_{k\in\cC\setminus\defEnsLigne{j} }\lambda_{k} ^{\star}(\tbfth)\parenthese{\type_{k} - \ttype_{k}}}\notag \\
    &= \outside_{j} + 2a\parentheseDeux{\parenthese{\sum_{k\in\cC\setminus\defEnsLigne{j}}\lambda^\star _{k}(\tbfth)} \parenthese{\type_{j} - \ttype_{j}}-\sum_{k\in\cC\setminus\defEnsLigne{j}}\lambda^\star _{k}(\tbfth) \parenthese{\type_{k} - \ttype_{k}}}\notag\\
&=\outside_{j} + 2a q_j (\ttype_j , \tbfth_{-j})\eqsp,
 \end{align*}

where\begin{equation}\label{eq:utilitycontributor}
    q_j(\ttype_j , \tbfth_{-j})= \sum_{k\in\cC\setminus\defEnsLigne{j}}\lambda^\star_{k} (\tbfth)\parentheseDeuxLigne{(\type_{j}-\ttype_{j})-(\type_{k}-\ttype_{k})}\eqsp.
\end{equation}

We prove that $v_{j}((1,\centraldot), \bfs_{-j})$ is strictly decreasing in $\ttype_{j}$ for $j\in\cC$, by showing that $\partial \hv_{j}((1,\ttype_{j}),\bfs_{-{i_j}})/\partial \ttype_{j}< 0$ For any $j\in\cC$. For  $v_{j}$ to be differentiable, we need $q_j$ to be differentiable, that is $\lambda^\star _{j} (\tbfth)=(\sum_{k\in\cC}\nstar_{k}(\tbfth))^{-1}\nstar_{j}(\tbfth)$ to be differentiable for any $j\in\cC$. We re-index $\cC$ as $\defEnsLigne{i_1 , \ldots, i_{\abs{\cC}}}$ so that $\ttype_{i_1}<\ldots < \ttype_{i_{\abs{\cC}}}$. Observe that for $0<h<\delta$ for $\delta >0$ sufficiently small, $\ttype_{i_{j-1}}<\ttype_{j}+h<\ttype_{i_{j+1}}$ because $\ttype_{i_{j-1}}<\ttype_{j}<\ttype_{i_{j+1}}$ by \Cref{lemma:deltatypebid} and \Cref{assumption_types_sorted}. Hence, the bid ordering does not change for any infinitesimal variation $\rmd\ttype_{j}>0$, nor does the indicator $\1\defEnsLigne{i_j\leq i_{\abs{\cC}}}$. Consequently by     \Cref{cor:simplifyingfullinfo}, $\nstar_{j}(\tbfth)$ is differentiable in $\ttype_{j}$ and so is $\lambda^\star (\tbfth)$. We have for any $j\in\cC$:
 \begin{align}
     \frac{\partial v_{j}((1,\ttype_{j}), \bfs_{k})}{\partial \ttype_{j}} &=2a\frac{\partial q_j (\ttype_j , \tbfth_{-j})}{\partial \ttype_{j}}\notag \\
     &=2a\sum_{k\in\cC\setminus\defEnsLigne{j}}\parentheseDeux{\frac{\partial \lambda^\star _{k}(\tbfth)}{\partial \ttype_{j}}\parenthese{\parenthese{\type_{j}-\ttype_{j}}-\parenthese{\type_{k}-\ttype_{k}}} -\lambda^\star _{k}(\tbfth)}\notag \\
    &= 2a\sum_{k\in\cC\setminus\defEnsLigne{j}}\frac{\partial \lambda^\star _{k}(\tbfth)}{\partial \ttype_{j}}\parenthese{\Delta - \Delta}-2a\sum_{k\in\cC\setminus\defEnsLigne{j}}\lambda^\star _{k}(\tbfth) \\
    &<0\eqsp.\label{eq:derivativenegative}
\end{align}
where we have used \Cref{lemma:deltatypebid} at the third line. This implies
\begin{equation}
    \label{eq:typemindominantforlminusone}
    \ttype_{j}=\typemin\qquad\text{for any } j\in\cC\eqsp.
\end{equation}
To see why, assume by contradiction that there exists $j\in\cC$ such that $s_{j} = (1,\ttype_{j})$ with $\ttype_{j} > \typemin$. Then for any $h\in(0,\ttype_{j}-\typemin]$, by  \eqref{eq:derivativenegative}: 
\begin{align*}
    v_{j}((1,\ttype_{j}-h), \bfs_{-j}) = v_{j}((1,\ttype_{j}), \bfs_{-j}) - \int_{\ttype_{j}-h}^{\ttype_{j}} \frac{\partial v_{j}((1,t), \bfs_{k})}{\partial t}dt > v_{j}((1,\ttype_{j}), \bfs_{k})\eqsp,
\end{align*}
which contradicts $\bfs$ being a Nash equilibrium.\end{proof}
Combining \Cref{lemma:deltatypebid} and  \Cref{lemma:typeminstrictlydominant} gives for any $(j,k)\in\cC^2$:
\begin{equation}
    \label{eq:contributorshavesametype}
    \type_{j}=\type_{k}\eqsp.
\end{equation}
Recall that by \Cref{assumption_types_sorted}, $\type_{m}=\type_{n}$ if and only if $m=n$, so \eqref{eq:contributorshavesametype} implies $\abs{\cC}= 1$. This in turn implies $\abs{\cB}= 1$. 
Indeed, assume by contradiction $\abs{\cC} = 1$ and  $\abs{\cB}\geq 2$. Denote $r\in\cB$ the only contributing agent. They are asked $\nstar_r (\tbfth) = \maxcontrib_r (\tbfth)=\noutside$. Moreover by \eqref{eq:defnbar} $\Nbar=\abs{\cB}^{1/1+\gamma}(\noutside + 1) - 1  > \noutside$, so by definition of the contribution scheme \eqref{def:simplifiedscheme}, there exists $k\in\cB\setminus\defEns{r}$ such that $\nstar_k (\tbfth)>0$. This contradicts $\abs{\cC}=1$.

We now show that $\cB = \defEns{J}$. By contradiction, assume $\cB=\defEns{j}$ with $j <J$. In particular, $s_J = (0,\dagger)$. Consider a deviation $s'_J = (1,\ttype_{J})$ with $\ttype_J \in\types$, so $\cB = \defEns{j,J}$ under $(s'_{J},\bfs_{-J})$. By \Cref{lemma:typeminstrictlydominant}, $\ttype_j = \ttype_J = \typemin$, so \eqref{eq:typesminusvartheta} rewrites:
\begin{equation}
    \label{eq:utilityweightedaverage}
    v_{k}((1,\ttype_k),\bfs_{-k})=\outside_k + 2a (\type_k - \vartheta(\collab, \bfns(\tbfth))\eqsp,
\end{equation}for any $k\in\defEns{j,J}$. Since $\type_j < \vartheta(\collab, \bfns(\tbfth))<\type_J$ by \Cref{assumption_types_sorted}, we have
$$
v_J((1,\ttype_J),\bfs_{-J})>\outside_J = v_J ((0,\dagger),\bfs_{-J})\eqsp,
$$
which contradicts $\bfs$ begin a Nash equilibrium. Hence, $\cB=\defEns{J}$. This concludes the proof. 
\end{proof}

\positivepaymentvcg*

\begin{proof}
    Let $(\collab,\tbfth)\in\defEns{0,1}^{J}\times\types^N$ be an equilibrium of the game induced by $\Gamma^\textsc{VCG}$. Since the VCG mechanism is strategyproof in dominant strategy, $\tbfth=\bfth$.  For any $j\in\agents$, the VCG payment is
    \begin{align}
        t_j(\tbfth) &= \sum_{k\ne j}u_k((0,\bOne_{-j}), \bfns(\bfth))-\sum_{k\ne j}u_k (\bOne, \bfns(\bfth))\notag\\
        &=\underbracket[0.140ex]{W((0,\bOne_{-j}),\bfns(\bfth))-W(\bOne,\bfns(\bfth))}_{\mathbf{(I)}}-\parentheseDeuxLigne{\underbracket[0.140ex]{u_j ((0,\bOne_{-j}), \bfns (\bfth))-u_j (\bOne, \bfns(\bfth))}_{\mathbf{(II)}}}\eqsp.\label{eq:vcgpayment}
    \end{align}
    Consider $j\in\iint{1}{\Lstar}$. By the contribution scheme \Cref{cor:simplifyingfullinfo}, $$u_j (\bOne, \bfns(\bfth))=-a(\Rstar + \eps(\bOne, \bfns(\bfth)))-c\nstar_j (\bfth)=\outside_j = u_j((0,\bOne_{-j}),\bfns(\bfth))\eqsp,$$where we used in the third inequality that $\nstar_j (\bfth)=\maxcontrib_j (\collab, \bfns(\bfth))$. Thus,  $\mathbf{(II)}=0$. Moreover by \Cref{prop:optimalfullinfo}, $W((0,\bOne_{-j}),\bfns(\bfth))<W(\bOne,\bfns(\bfth))$ so $\mathbf{(I)}<0$. Thus, by \eqref{eq:vcgpayment} we have $t_j (\tbfth)<0$.
\end{proof}

\truthfulnessisnash*

\begin{proof}Let $\collab^\star = (1,\ldots,1)^{\transpose}$, so the  aggregator has at her disposal $\htype_1, \ldots, \htype_J$.

We first focus on the event under which $\htype_j$ correctly approximates $\type_j$ for any $j\leq J$. Define $\sfE_j (\delta)=\defEnsLigne{\omega\in\Omega:\: \absLigne{\htype_j(\omega) - \type_j}\leq \eta_{\delta}}$ and  $\sfE(\delta)=\bigcap_{j\leq J}\sfE_j (\delta)$. We have 
\begin{equation}
    \label{lemma:probaEdeltaL}
     \P(\sfE(\delta / J))\geq 1-\delta\eqsp,
\end{equation}
because $\P(\sfE(\delta))=\P(\cap_{j\leq J}\sfE_j (\delta))=1-\P(\cup_{j\leq J}\sfE_j (\delta)) \geq 1-\sum_{j\leq  J}\P(\sfE_j (\delta))=1-J\delta$, and setting $\delta' = J\delta$ leads to \Cref{lemma:probaEdeltaL}. In what follows, we assume that $E(\delta/J)$ is true so $\absLigne{\htype_j - \type_j}\leq \etadl$ for any $j\in\agents$. We denote by $\cC=\defEnsLigne{j\in\agents:\: m_j (\hbfth)=\nstar_j (\hbfth+\bfeta_j )>0}$ the set of contributors, and to lighten notations we write
    \begin{align*}
        \vartheta(\hbfth)&=\Nbar^{-1}\bfns(\hbfth+\bfeta_j)^{\transpose}\bfth = \blamb^\star (\hbfth + \bfeta_j)^{\transpose}\bfth\\
\text{and}\quad\hvt_j(\hbfth)&=\Nbar^{-1}\bfns(\hbfth + \bfeta_j)^{\transpose}(\hbfth + \bfeta_j)=\blamb^\star (\hbfth + \bfeta_j)^{\transpose}(\hbfth + \bfeta_j)\eqsp,\\
\text{where}\quad\bfeta_j &= \etadl \bOne - 2\boldsymbol{\delta}_j \etadl,\quad\text{with}\quad\boldsymbol{\delta}_j = (0,\ldots,0,1,0,\ldots,0)\eqsp,
    \end{align*}
for any $j\in\agents$. Note that $\vartheta (\hbfth)$ is the actual weighted type within the coalition, whereas $\hat{\vartheta}(\hbfth)$ is the weighted type estimated by the aggregator. Consider first a contributor $j\in\cC$ who is asked $\nstar_j (\hbfth +\bfeta_j)>0$ samples. Their payoff under $\Bstar$ is
    \begin{align}
        \hv_j (1,\bOne_{-j})&=-a(\Rstar + \eps(\vartheta(\hbfth), N))-c\nstar_j (\hbfth + \bfeta_j) \notag\\
        &= -a(\Rstar + \eps(\vartheta(\hbfth), N)) -c\parentheseDeux{\noutside - \frac{a}{c}\parenthese{\eps(\hvt_j(\hbfth),N)-\eps(\htype_j -\etadl, \noutside)}}\notag\\
        &= -a\parenthese{\Rstar + \eps(\type_j , \noutside)}-c\noutside \notag\\
        &+a\parentheseDeux{\parenthese{\eps(\type_j, \noutside)-\eps(\htype_j - \etadl)} - \parenthese{\eps(\vartheta(\hbfth),N)-\eps(\hvt_j(\hbfth),N)}} \notag\\
        &=\outside_j + 2a\parentheseDeux{\parenthese{\type_j - (\htype_j - \etadl)}-\parenthese{\vartheta(\hbfth)-\hvt_j (\bfth)}}\notag\\
        &=\hv_j (0,\bOne_{-j})+2a(1-\lambda^\star _j (\hbfth + \bfeta_j))(\type_j -(\htype_j - \etadl))\notag\\
        &-2a\sum_{\ell \in\cC\setminus\defEnsLigne{j}}\lambda^\star _{\ell}(\hbfth + \bfeta_j)(\type_{\ell}-(\htype_{\ell}+\etadl))\eqsp,\notag\\
        \intertext{and since $E(\delta/J)$ is true, $\type_j \geq \htype_j - \etadl$ and $\type_{\ell} \leq \htype_{\ell} + \etadl$ for any $\ell \in\cC\setminus\defEnsLigne{j}$, so}
        \hv_j (1,\bOne_{-j})&\geq \hv_j (0,\bOne_{-j})\eqsp.\label{eq:casejinc}
    \end{align}Second, consider $j\notin \cC$ who is asked $ \qubar \leq \noutside - 2(a/c)(\typemax-\typemin) $ samples, and denote by $r\in\agents$ the agent such that $\type_r = \max_{k\in\cC}\type_k$. Observe that $\qubar \leq \nstar_r (\hbfth+\bfeta_r)$, so
    \begin{align*}
        \hv_j (1,\bOne_{-j})&=-a(\Rstar + \eps(\vartheta(\hbfth),N))-c\qubar \\
        &\geq -a(\Rstar + \eps(\vartheta(\hbfth),N))-c\nstar_r (\hbfth + \bfeta_r)\\
        &\geq \outside_r\eqsp,
        \intertext{where the last inequality comes from $r\in\cC$ and \eqref{eq:casejinc}. Since $j\notin\cC$, we have $\type_j \geq\htype_j - \etadl \geq \htype_r + \etadl \geq \type_r $ so $\outside_r \geq \outside_j$, and it follows that}
        \hv_j (1,\bOne_{-j}) &\geq \outside_j = \hv_j (0, \bOne_{-j})\eqsp.
    \end{align*}\end{proof}

\estimatorissuitable*

\begin{proof}
    Let $j\in\agents$ and $g\in\hypoclass$, we have
    \begin{align}
        \htype^{\textsc{erm}}_{0,j}&=\sup_{g\in\hypoclass}\abs{\emprisk{g}{j} - \emprisk{g}{0}} \notag \\
        &\leq \sup_{g\in\hypoclass}\abs{\emprisk{g}{j} - \risk{g}{j}} + \sup_{g\in\hypoclass}\abs{\risk{g}{j}-\risk{g}{0}} + \sup_{g\in\hypoclass}\abs{\emprisk{g}{0} - \risk{g}{0}} \notag \\
        & \leq \alpha_{\delta/4}\parentheseDeux{(q+1)^{-\gamma}+(\qprime+1)^{-\gamma}}
        + 2\beta  + \type_j\eqsp,\label{eq:boundonesense} \\
    \end{align}with probability $1-\delta/2$ by \Cref{definition_type}, \Cref{assumption_upper_bound_risk},  \Cref{assumption_sample_from_Pzero}, and a union bound. Similarly, 
        \begin{align}
        \type_j&=\sup_{g\in\hypoclass}\abs{\risk{g}{j} - \risk{g}{0}} \notag\\
        &\leq \sup_{g\in\hypoclass}\abs{\risk{g}{j} - \emprisk{g}{j}} + \sup_{g\in\hypoclass}\abs{\emprisk{g}{j}-\emprisk{g}{0}} + \sup_{g\in\hypoclass}\abs{\emprisk{g}{0} - \risk{g}{0}}\notag \\
        & \leq \alpha_{\delta/4}\parentheseDeux{(q+1)^{-\gamma}+(\qprime+1)^{-\gamma}}
        + 2\beta  + \htype^{\textsc{erm}}_{0,j}\eqsp,\label{eq:boundothersense}\eqsp
    \end{align}
    with probability $1-\delta/2$. Combining \eqref{eq:boundonesense} and \eqref{eq:boundothersense} along with an union bound yields
    $$
\absLigne{\htype_{0,j} ^{\textsc{erm}}-\type_{j}} \leq\alpha_{\delta/4}\parentheseDeux{(q+1)^{-\gamma}+(\qprime+1)^{-\gamma}}
        + 2\beta\eqsp,
    $$with probability $1-\delta$.
\end{proof}

\exampleclassif*

\begin{proof}\begin{enumerate}[wide, labelwidth=!, labelindent=0pt, label=(\roman*)]\item Let $j\in\cB$. Observe that under \Cref{assumption:classification},we have $$\emprisk{-g}{j}=n_j ^{-1}\sum_{i=1}^{n_j}\1\defEnsLigne{-g(X^j _i)Y^j _i <0}=n_j^{-1}\sum_{i=1}^{n_j} (1-\1\defEnsLigne{g(X^j _i)Y^j _i<0})=1-\emprisk{g}{j}\eqsp.$$Since the hypothesis class $\hypoclass$ is symmetric, we have
\begin{align*}
    \htype^{\textsc{erm}}_j &= \sup_{g\in\hypoclass}\abs{\emprisk{g}{j}-\emprisk{g}{0}} = \sup_{g\in\hypoclass}\parenthese{\emprisk{g}{j}-\emprisk{g}{0}}\\
    &=\sup_{g\in\hypoclass}\parenthese{1-\parenthese{\emprisk{-g}{j}+\emprisk{g}{0}}}
    = 1-\inf_{g\in\hypoclass}\parenthese{\emprisk{g}{j^{-}}+\emprisk{g}{0}} \eqsp.
\end{align*}
\item  Let $j\in\agents$, we have   \begin{align*}
\eta_{\delta/J}(q)&=\alpha_{\delta/4J}[(q+1)^{-\gamma}+(1+\qprime)^{-\gamma}]+2\beta\\
&=\ln(4J / \delta)^{1/2}[(1+q)^{-\gamma}+(1+\qprime)^{-\gamma}] + 2\mathrm{Rad}(\hypoclass)\eqsp.
    \end{align*}
\end{enumerate}
\end{proof}

\newpage
\section*{NeurIPS Paper Checklist}


\begin{enumerate}

\item {\bf Claims}
    \item[] Question: Do the main claims made in the abstract and introduction accurately reflect the paper's contributions and scope?
    \item[] Answer: \answerYes{} 
    \item[] Justification: the claimed contributions are supported by proven theorems.
    \item[] Guidelines:
    \begin{itemize}
        \item The answer NA means that the abstract and introduction do not include the claims made in the paper.
        \item The abstract and/or introduction should clearly state the claims made, including the contributions made in the paper and important assumptions and limitations. A No or NA answer to this question will not be perceived well by the reviewers. 
        \item The claims made should match theoretical and experimental results, and reflect how much the results can be expected to generalize to other settings. 
        \item It is fine to include aspirational goals as motivation as long as it is clear that these goals are not attained by the paper. 
    \end{itemize}

\item {\bf Limitations}
    \item[] Question: Does the paper discuss the limitations of the work performed by the authors?
    \item[] Answer: \answerYes{} 
    \item[] Justification: all the made assumptions are clearly highlighted and discussed.
    \item[] Guidelines:
    \begin{itemize}
        \item The answer NA means that the paper has no limitation while the answer No means that the paper has limitations, but those are not discussed in the paper. 
        \item The authors are encouraged to create a separate "Limitations" section in their paper.
        \item The paper should point out any strong assumptions and how robust the results are to violations of these assumptions (e.g., independence assumptions, noiseless settings, model well-specification, asymptotic approximations only holding locally). The authors should reflect on how these assumptions might be violated in practice and what the implications would be.
        \item The authors should reflect on the scope of the claims made, e.g., if the approach was only tested on a few datasets or with a few runs. In general, empirical results often depend on implicit assumptions, which should be articulated.
        \item The authors should reflect on the factors that influence the performance of the approach. For example, a facial recognition algorithm may perform poorly when image resolution is low or images are taken in low lighting. Or a speech-to-text system might not be used reliably to provide closed captions for online lectures because it fails to handle technical jargon.
        \item The authors should discuss the computational efficiency of the proposed algorithms and how they scale with dataset size.
        \item If applicable, the authors should discuss possible limitations of their approach to address problems of privacy and fairness.
        \item While the authors might fear that complete honesty about limitations might be used by reviewers as grounds for rejection, a worse outcome might be that reviewers discover limitations that aren't acknowledged in the paper. The authors should use their best judgment and recognize that individual actions in favor of transparency play an important role in developing norms that preserve the integrity of the community. Reviewers will be specifically instructed to not penalize honesty concerning limitations.
    \end{itemize}

\item {\bf Theory Assumptions and Proofs}
    \item[] Question: For each theoretical result, does the paper provide the full set of assumptions and a complete (and correct) proof?
    \item[] Answer: \answerYes{} 
    \item[] Justification: Assumptions are made clear and given for each theorem individually. Full proofs are given in the Appendix.
    \item[] Guidelines:
    \begin{itemize}
        \item The answer NA means that the paper does not include theoretical results. 
        \item All the theorems, formulas, and proofs in the paper should be numbered and cross-referenced.
        \item All assumptions should be clearly stated or referenced in the statement of any theorems.
        \item The proofs can either appear in the main paper or the supplemental material, but if they appear in the supplemental material, the authors are encouraged to provide a short proof sketch to provide intuition. 
        \item Inversely, any informal proof provided in the core of the paper should be complemented by formal proofs provided in appendix or supplemental material.
        \item Theorems and Lemmas that the proof relies upon should be properly referenced. 
    \end{itemize}

    \item {\bf Experimental Result Reproducibility}
    \item[] Question: Does the paper fully disclose all the information needed to reproduce the main experimental results of the paper to the extent that it affects the main claims and/or conclusions of the paper (regardless of whether the code and data are provided or not)?
    \item[] Answer: \answerNA{} 
    \item[] Justification: no experimental result in the paper
    \item[] Guidelines:
    \begin{itemize}
        \item The answer NA means that the paper does not include experiments.
        \item If the paper includes experiments, a No answer to this question will not be perceived well by the reviewers: Making the paper reproducible is important, regardless of whether the code and data are provided or not.
        \item If the contribution is a dataset and/or model, the authors should describe the steps taken to make their results reproducible or verifiable. 
        \item Depending on the contribution, reproducibility can be accomplished in various ways. For example, if the contribution is a novel architecture, describing the architecture fully might suffice, or if the contribution is a specific model and empirical evaluation, it may be necessary to either make it possible for others to replicate the model with the same dataset, or provide access to the model. In general. releasing code and data is often one good way to accomplish this, but reproducibility can also be provided via detailed instructions for how to replicate the results, access to a hosted model (e.g., in the case of a large language model), releasing of a model checkpoint, or other means that are appropriate to the research performed.
        \item While NeurIPS does not require releasing code, the conference does require all submissions to provide some reasonable avenue for reproducibility, which may depend on the nature of the contribution. For example
        \begin{enumerate}
            \item If the contribution is primarily a new algorithm, the paper should make it clear how to reproduce that algorithm.
            \item If the contribution is primarily a new model architecture, the paper should describe the architecture clearly and fully.
            \item If the contribution is a new model (e.g., a large language model), then there should either be a way to access this model for reproducing the results or a way to reproduce the model (e.g., with an open-source dataset or instructions for how to construct the dataset).
            \item We recognize that reproducibility may be tricky in some cases, in which case authors are welcome to describe the particular way they provide for reproducibility. In the case of closed-source models, it may be that access to the model is limited in some way (e.g., to registered users), but it should be possible for other researchers to have some path to reproducing or verifying the results.
        \end{enumerate}
    \end{itemize}

\item {\bf Open access to data and code}
    \item[] Question: Does the paper provide open access to the data and code, with sufficient instructions to faithfully reproduce the main experimental results, as described in supplemental material?
    \item[] Answer: \answerNA{} 
    \item[] Justification: no experimental result in the paper
    \item[] Guidelines:
    \begin{itemize}
        \item The answer NA means that paper does not include experiments requiring code.
        \item Please see the NeurIPS code and data submission guidelines (\url{https://nips.cc/public/guides/CodeSubmissionPolicy}) for more details.
        \item While we encourage the release of code and data, we understand that this might not be possible, so “No” is an acceptable answer. Papers cannot be rejected simply for not including code, unless this is central to the contribution (e.g., for a new open-source benchmark).
        \item The instructions should contain the exact command and environment needed to run to reproduce the results. See the NeurIPS code and data submission guidelines (\url{https://nips.cc/public/guides/CodeSubmissionPolicy}) for more details.
        \item The authors should provide instructions on data access and preparation, including how to access the raw data, preprocessed data, intermediate data, and generated data, etc.
        \item The authors should provide scripts to reproduce all experimental results for the new proposed method and baselines. If only a subset of experiments are reproducible, they should state which ones are omitted from the script and why.
        \item At submission time, to preserve anonymity, the authors should release anonymized versions (if applicable).
        \item Providing as much information as possible in supplemental material (appended to the paper) is recommended, but including URLs to data and code is permitted.
    \end{itemize}

\item {\bf Experimental Setting/Details}
    \item[] Question: Does the paper specify all the training and test details (e.g., data splits, hyperparameters, how they were chosen, type of optimizer, etc.) necessary to understand the results?
    \item[] Answer: \answerNA{} 
    \item[] Justification: no experimental result in the paper
    \item[] Guidelines:
    \begin{itemize}
        \item The answer NA means that the paper does not include experiments.
        \item The experimental setting should be presented in the core of the paper to a level of detail that is necessary to appreciate the results and make sense of them.
        \item The full details can be provided either with the code, in appendix, or as supplemental material.
    \end{itemize}

\item {\bf Experiment Statistical Significance}
    \item[] Question: Does the paper report error bars suitably and correctly defined or other appropriate information about the statistical significance of the experiments?
    \item[] Answer: \answerNA{} 
    \item[] Justification: no experimental result in the paper
    \item[] Guidelines:
    \begin{itemize}
        \item The answer NA means that the paper does not include experiments.
        \item The authors should answer "Yes" if the results are accompanied by error bars, confidence intervals, or statistical significance tests, at least for the experiments that support the main claims of the paper.
        \item The factors of variability that the error bars are capturing should be clearly stated (for example, train/test split, initialization, random drawing of some parameter, or overall run with given experimental conditions).
        \item The method for calculating the error bars should be explained (closed form formula, call to a library function, bootstrap, etc.)
        \item The assumptions made should be given (e.g., Normally distributed errors).
        \item It should be clear whether the error bar is the standard deviation or the standard error of the mean.
        \item It is OK to report 1-sigma error bars, but one should state it. The authors should preferably report a 2-sigma error bar than state that they have a 96\% CI, if the hypothesis of Normality of errors is not verified.
        \item For asymmetric distributions, the authors should be careful not to show in tables or figures symmetric error bars that would yield results that are out of range (e.g. negative error rates).
        \item If error bars are reported in tables or plots, The authors should explain in the text how they were calculated and reference the corresponding figures or tables in the text.
    \end{itemize}

\item {\bf Experiments Compute Resources}
    \item[] Question: For each experiment, does the paper provide sufficient information on the computer resources (type of compute workers, memory, time of execution) needed to reproduce the experiments?
    \item[] Answer: \answerNA{} 
    \item[] Justification: no experimental result in the paper
    \item[] Guidelines:
    \begin{itemize}
        \item The answer NA means that the paper does not include experiments.
        \item The paper should indicate the type of compute workers CPU or GPU, internal cluster, or cloud provider, including relevant memory and storage.
        \item The paper should provide the amount of compute required for each of the individual experimental runs as well as estimate the total compute. 
        \item The paper should disclose whether the full research project required more compute than the experiments reported in the paper (e.g., preliminary or failed experiments that didn't make it into the paper). 
    \end{itemize}
    
\item {\bf Code Of Ethics}
    \item[] Question: Does the research conducted in the paper conform, in every respect, with the NeurIPS Code of Ethics \url{https://neurips.cc/public/EthicsGuidelines}?
    \item[] Answer: \answerYes{} 
    \item[] Justification: no data used for this work
    \item[] Guidelines:
    \begin{itemize}
        \item The answer NA means that the authors have not reviewed the NeurIPS Code of Ethics.
        \item If the authors answer No, they should explain the special circumstances that require a deviation from the Code of Ethics.
        \item The authors should make sure to preserve anonymity (e.g., if there is a special consideration due to laws or regulations in their jurisdiction).
    \end{itemize}

\item {\bf Broader Impacts}
    \item[] Question: Does the paper discuss both potential positive societal impacts and negative societal impacts of the work performed?
    \item[] Answer: \answerNA{} 
    \item[] Justification: no societal impact of the work performed
    \item[] Guidelines:
    \begin{itemize}
        \item The answer NA means that there is no societal impact of the work performed.
        \item If the authors answer NA or No, they should explain why their work has no societal impact or why the paper does not address societal impact.
        \item Examples of negative societal impacts include potential malicious or unintended uses (e.g., disinformation, generating fake profiles, surveillance), fairness considerations (e.g., deployment of technologies that could make decisions that unfairly impact specific groups), privacy considerations, and security considerations.
        \item The conference expects that many papers will be foundational research and not tied to particular applications, let alone deployments. However, if there is a direct path to any negative applications, the authors should point it out. For example, it is legitimate to point out that an improvement in the quality of generative models could be used to generate deepfakes for disinformation. On the other hand, it is not needed to point out that a generic algorithm for optimizing neural networks could enable people to train models that generate Deepfakes faster.
        \item The authors should consider possible harms that could arise when the technology is being used as intended and functioning correctly, harms that could arise when the technology is being used as intended but gives incorrect results, and harms following from (intentional or unintentional) misuse of the technology.
        \item If there are negative societal impacts, the authors could also discuss possible mitigation strategies (e.g., gated release of models, providing defenses in addition to attacks, mechanisms for monitoring misuse, mechanisms to monitor how a system learns from feedback over time, improving the efficiency and accessibility of ML).
    \end{itemize}
    
\item {\bf Safeguards}
    \item[] Question: Does the paper describe safeguards that have been put in place for responsible release of data or models that have a high risk for misuse (e.g., pretrained language models, image generators, or scraped datasets)?
    \item[] Answer: \answerNA{} 
    \item[] Justification: no data
    \item[] Guidelines:
    \begin{itemize}
        \item The answer NA means that the paper poses no such risks.
        \item Released models that have a high risk for misuse or dual-use should be released with necessary safeguards to allow for controlled use of the model, for example by requiring that users adhere to usage guidelines or restrictions to access the model or implementing safety filters. 
        \item Datasets that have been scraped from the Internet could pose safety risks. The authors should describe how they avoided releasing unsafe images.
        \item We recognize that providing effective safeguards is challenging, and many papers do not require this, but we encourage authors to take this into account and make a best faith effort.
    \end{itemize}

\item {\bf Licenses for existing assets}
    \item[] Question: Are the creators or original owners of assets (e.g., code, data, models), used in the paper, properly credited and are the license and terms of use explicitly mentioned and properly respected?
    \item[] Answer: \answerNA{} 
    \item[] Justification: no existing assets
    \item[] Guidelines:
    \begin{itemize}
        \item The answer NA means that the paper does not use existing assets.
        \item The authors should cite the original paper that produced the code package or dataset.
        \item The authors should state which version of the asset is used and, if possible, include a URL.
        \item The name of the license (e.g., CC-BY 4.0) should be included for each asset.
        \item For scraped data from a particular source (e.g., website), the copyright and terms of service of that source should be provided.
        \item If assets are released, the license, copyright information, and terms of use in the package should be provided. For popular datasets, \url{paperswithcode.com/datasets} has curated licenses for some datasets. Their licensing guide can help determine the license of a dataset.
        \item For existing datasets that are re-packaged, both the original license and the license of the derived asset (if it has changed) should be provided.
        \item If this information is not available online, the authors are encouraged to reach out to the asset's creators.
    \end{itemize}

\item {\bf New Assets}
    \item[] Question: Are new assets introduced in the paper well documented and is the documentation provided alongside the assets?
    \item[] Answer: \answerNA{} 
    \item[] Justification: no new asset
    \item[] Guidelines:
    \begin{itemize}
        \item The answer NA means that the paper does not release new assets.
        \item Researchers should communicate the details of the dataset/code/model as part of their submissions via structured templates. This includes details about training, license, limitations, etc. 
        \item The paper should discuss whether and how consent was obtained from people whose asset is used.
        \item At submission time, remember to anonymize your assets (if applicable). You can either create an anonymized URL or include an anonymized zip file.
    \end{itemize}

\item {\bf Crowdsourcing and Research with Human Subjects}
    \item[] Question: For crowdsourcing experiments and research with human subjects, does the paper include the full text of instructions given to participants and screenshots, if applicable, as well as details about compensation (if any)? 
    \item[] Answer: \answerNA{} 
    \item[] Justification: no crowdsourcing nor research with human subjects
    \item[] Guidelines:
    \begin{itemize}
        \item The answer NA means that the paper does not involve crowdsourcing nor research with human subjects.
        \item Including this information in the supplemental material is fine, but if the main contribution of the paper involves human subjects, then as much detail as possible should be included in the main paper. 
        \item According to the NeurIPS Code of Ethics, workers involved in data collection, curation, or other labor should be paid at least the minimum wage in the country of the data collector. 
    \end{itemize}

\item {\bf Institutional Review Board (IRB) Approvals or Equivalent for Research with Human Subjects}
    \item[] Question: Does the paper describe potential risks incurred by study participants, whether such risks were disclosed to the subjects, and whether Institutional Review Board (IRB) approvals (or an equivalent approval/review based on the requirements of your country or institution) were obtained?
    \item[] Answer: \answerNA{} 
    \item[] Justification: no crowdsourcing nor research with human subjects
    \item[] Guidelines:
    \begin{itemize}
        \item The answer NA means that the paper does not involve crowdsourcing nor research with human subjects.
        \item Depending on the country in which research is conducted, IRB approval (or equivalent) may be required for any human subjects research. If you obtained IRB approval, you should clearly state this in the paper. 
        \item We recognize that the procedures for this may vary significantly between institutions and locations, and we expect authors to adhere to the NeurIPS Code of Ethics and the guidelines for their institution. 
        \item For initial submissions, do not include any information that would break anonymity (if applicable), such as the institution conducting the review.
    \end{itemize}

\end{enumerate}
\end{document}